\def\BibTeX{{\rm B\kern-.05em{\sc i\kern-.025em b}\kern-.08em
    T\kern-.1667em\lower.7ex\hbox{E}\kern-.125emX}}
\newtheorem{theorem}{Theorem}
\newtheorem{proposition}[theorem]{Proposition}
\newtheorem{lemma}[theorem]{Lemma}
\newtheorem{definition}[theorem]{Definition}
\newtheorem{corollary}[theorem]{Corollary}
\newtheorem{remark}[theorem]{Remark}
\newcommand{\supp}{\mathrm{supp}}
\newcommand{\LT}{\mathrm{LT}}
\newcommand{\LP}{\mathrm{LP}}
\newcommand{\Zs}{\Z^s_{\geq 0}}
\newcommand{\Q}{\mathcal{Q}}
\newcommand{\D}{ \mathcal{D}}
\newcommand{\G}{ \mathcal{G}}
\newcommand{\N}{{\mathbb N}}
\newcommand{\Z}{{\mathbb Z}}
\newcommand{\F}{{\mathbb F}}
\renewcommand{\L}{{\mathbb L}}
\newcommand{\X}{ \boldsymbol{X}}
\newcommand{\bs}[1]{\boldsymbol{#1}}
\newcommand{\I}{ \bs{\mathcal{I}}}
\newcommand{\flecha}{\rightarrow}
\newcommand{\tq}{\;\mid\;}
\begin{document}

\title{A Note on the Theoretical Support to Compute Dimension in Abelian Codes
\thanks{Support: Grant PID2020-113206GB-I00 funded by MICIU/ AEI/  10.13039/501100011033 y Fundaci\'{o}n S\'{e}neca de Murcia, project 22004/PI/22.}
}

\author{\IEEEauthorblockN{Jos\'e Joaqu\'{\i}n Bernal and Juan Jacobo Sim\'on}
\IEEEauthorblockA{\textit{Departamento de Matem\'aticas} \\
\textit{Universidad de Murcia}\\
Murcia, Spain \\
\{josejoaquin.bernal, jsimon\}@um.es}}

\maketitle

\begin{abstract}
In this note we give a theoretical support by means of quotient polynomial rings for the computation formulas of the dimension of abelian codes.
\end{abstract}

\begin{IEEEkeywords}
Abelian codes, dimension.
\end{IEEEkeywords}

\section{Introduction}

As it is known, the computation of the dimension of abelian codes may be done by easy formulas expressed in terms of their defining sets (see Theorem~\ref{dimension codigos conj def} below).Those formulas was developed in the setting of group rings (see \cite{elias} for a nice exposition) and all proofs are written by means of character theory. This situation comes from the fact that the original frame of abelian codes were abelian group rings and group representations (see \cite{camion}). 

Over time, more complicated computations and tools drove a number of theorists to consider abelian codes in the setting of quotient rings of polynomial rings. In this way, the theoretical support of some basic computations, as dimension in terms of defining sets, has not been updated, or has been only partially updated, as in \cite{Poli}.

Another way to find the dimension of an abelian code may be done by using Groebner basis, as in \cite{Cox}. It may be done in terms of what is called the footprint of an ideal (see \cite{blahut}). All computations and theoretical results are, by default, expressed in terms of polynomials; and, in contrast with the case of defining sets, all arguments are clearly known.

The aim of this note is to give a full theoretical support for the computation formulas to get the dimension of abelian codes. First, we will study the correspondence between abelian codes and their defining sets. Then we shall deduce the computation formulas in those terms. Then, we recall the definition and construction of footprints and see the computation formula that arises from them. Finally, we comment on some interactions between defining sets and footprints from Groebner basis.

\section{Notation and basic definitions}\label{seccion de notacion}

Throughout this note, $\F$  will be a finite field with $q$ elements, with $q$ a power of a prime number, $r_i$ will be positive integers, for $i\in \{ 1,\dots,s\}$, and $n=r_1\cdots r_s$.  We denote by $\Z_{r_i}$ the ring of integers modulo $r_i$. We always write its elements as \textbf{canonical representatives}. When necessary, we write $\overline{a}\in\Z_k$ for any $a\in\Z$ and $k\in \N$. For any commutative ring $R$ and any subset $A\subseteq R$ we denote by $\langle A\rangle$ the ideal generated by $A$, as usual.
 
An \textbf{Abelian Code} of length $n=r_1\dots r_s$ is an ideal in the algebra   $\F(r_1,\ldots, r_s)=\F[X_1,\ldots, X_s]/\langle X_1^{r_1}-1,\ldots, X_s^{r_s}-1\rangle$ and we assume that this algebra is semisimple; that is, $\gcd (n,q)=1$. For the sake of simplicity, we often refer to ``Abelian Code'' simply as ``code''. 
It is well-known (see, for example, \cite{anderson-fuller}) that every finite semisimple commutative ring has a (unique) \textbf{complete set of primitive orthogonal idempotents} associated to its decomposition as finite product of simple rings.

Codewords are identified with polynomials. We  denote the weight of a codeword $c$ by $\omega(c)$. We set $\I=\Z_{r_1}\times\cdots\times \Z_{r_s}$ and we  write the elements $f \in  \F(r_1,\dots,r_s)$ as its canonical representatives; that is, $f=\sum a_m \X^m$, where $m=(m_1,\dots, m_s)\in \I$ and $\X^m=X_1^{m_1}\cdots X_s^{m_s}$. When necessary, for a polynomial $f \in \F[X_1,\dots,X_s]$ we denote by $\overline{f}$ its image under the canonical projection onto $\F(r_1,\dots,r_s)$. 

For each $i\in \{ 1,\ldots, s\}$, we denote by $R_{r_i}$ (resp. $U_{r_i}$) the set of all $r_i$-th roots of unity (resp. all $r_i$-th primitive roots of unity) and define $R=\prod_{i=1}^s R_{r_i}$ ($U=\prod_{i=1}^s U_{r_i}$). For $m=(m_1,\ldots, m_s)\in \I$, we write $\bs{\alpha}^{m} = (\alpha_1^{m_1}, \dots ,\alpha_s^{m_s})$. Throughout this paper, we fix the notation $\L$ for an extension field, $\L|\F$, containing $U_{r_i}$, for all  $i\in\{1,\dots,s\}$.

For a canonical representative $f \in  \F(r_1,\dots,r_s)$ seen as polynomial $f=f(X_1,\dots,X_s) \in \F[X_1,\dots,X_s]$  and $\bs{\alpha}\in R$, we write $f(\bs{\alpha})=f(\alpha_1,\dots,\alpha_s)$.

\section{Defining sets in Abelian Codes}

The study of several important parameters of the Cyclic and Abelian Codes is carried out through their definition sets. For example, their minimum distance or dimension may be computed easily by means of them and it is also essential in all decoding techniques because they are needed to determine the correction capability and to form the syndrome values of  error polynomials.

In this section we shall prove that every abelian code $C$ in $\F(r_1,\dots,r_s)$ is totally determined by its defining set (see definition below). We keep all notation above; in particular, we recall that $\F(r_1,\dots,r_s)$ is a semisimple ring.

 We recall the notion of cyclotomic coset. For any $\gamma\in\N$ the $q^{\gamma}$-cyclotomic coset of an integer $a$ modulo $r$ is the set
\[ C_{(q^\gamma,r)}(a)=\left\{a\cdot q^{\gamma\cdot i} \tq i \in \N\right\} \subseteq \Z_r.\]

By elementary algebraic properties of polynomials over finite fields it is well-known that if $(a,\dots,a_s)$ is a root of the polynomial $f\in \F[\X]$ then $(a_1^q,\dots,a_s^q)$ is also a root of $f$. This property is the key of the extension of the notion of $q$-cyclotomic coset to several variables.

 \begin{definition}\label{qorbita}
In the setting described above, given an element $(a_1,\dots,a_s)\in I$, we define its $q$-orbit modulo  $\left(r_1,\dots,r_s\right)$ as the subset $Q(a_1,\dots,a_s)= \left\{\left(a_1\cdot q^i ,\dots,a_s\cdot q^i  \right)\tq i\in \N\right\} \subseteq \I$.
\end{definition}

We also extend the notion of set of zeros for Cyclic Codes.

\begin{definition}
 Let $A\subset \F(r_1,\dots,r_s)$. The set of zeros of $A$ is $\mathcal{Z}(A)=\left\{\bs{\beta}\in R \tq f(\bs\beta)=0,\text{ for all } f\in A\right\}$.
\end{definition}

Once one has fixed $\bs\alpha\in U$ one may define the following object.

\begin{definition}\label{conjunto de definicion}
 In the setting above, the defining set of $A$, with respect to $\bs\alpha\in U$ is $\D_{\bs\alpha}(A)=\left\{m\in \I\tq \bs\alpha^m\in \mathcal{Z}(A)\right\}$.
\end{definition}

\begin{remark}\rm{
 Let $C$ be a code in $\F(r_1,\dots,r_s)$ and $\bs\alpha\in U$. From Definition~\ref{qorbita} we have immediately that the defining set $\D_{\bs\alpha}(C)$ is a disjoint union of $q$-orbits.}
\end{remark}

Now we shall prove that there is a one to one correspondence between Abelian Codes and the sets of $q$-orbits.

\begin{lemma}[See \cite{ghorpade}]\label{polinomio cero caracterizado}
	Let $f\in \F(r_1,\dots,r_s)$ be an arbitrary element and consider $\bs{\alpha}\in U$. If $f(\bs{\alpha}^m)=0$ for all $m\in \I$ then $f=0$ in $\F[\X]$.
\end{lemma}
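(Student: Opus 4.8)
The plan is to prove this by induction on the number of variables $s$, reducing everything to the one-variable fact that a nonzero univariate polynomial of degree $d$ over a field has at most $d$ roots.

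First I would write $f$ as its canonical representative $f=\sum_{m\in\I}a_m\X^m$, so that $\deg_{X_i}f<r_i$ for every $i$; the goal then becomes to show that all coefficients $a_m$ vanish. For $s=1$, the polynomial $f=\sum_{j=0}^{r_1-1}a_jX_1^j\in\F[X_1]$ has degree at most $r_1-1$ and, by hypothesis, vanishes at the $r_1$ elements $1,\alpha_1,\dots,\alpha_1^{r_1-1}$ of $\L$, which are pairwise distinct because $\alpha_1\in U_{r_1}$ has multiplicative order exactly $r_1$. Since $\L$ is a field, this forces $f=0$.

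For the inductive step, I would single out the variable $X_s$ and write $f=\sum_{k=0}^{r_s-1}g_k(X_1,\dots,X_{s-1})\,X_s^k$, with each $g_k$ a canonical representative in $s-1$ variables. Fixing an arbitrary $m'=(m_1,\dots,m_{s-1})$ and putting $\bs{\beta}=(\alpha_1^{m_1},\dots,\alpha_{s-1}^{m_{s-1}})$, the one-variable polynomial $h(X_s)=f(\bs{\beta},X_s)=\sum_{k=0}^{r_s-1}g_k(\bs{\beta})\,X_s^k$ over $\L$ has degree at most $r_s-1$ and vanishes at $1,\alpha_s,\dots,\alpha_s^{r_s-1}$; hence $h=0$ and $g_k(\bs{\beta})=0$ for every $k$. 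As $m'$ ranges over $\Z_{r_1}\times\cdots\times\Z_{r_{s-1}}$, each $g_k$ satisfies the hypothesis of the lemma with $s-1$ variables, so by induction $g_k=0$ for all $k$, and therefore $f=0$.

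Equivalently, one can argue in a single step: evaluation of the monomial $\X^k$ at $\bs{\alpha}^m$ equals $\prod_{i=1}^s\alpha_i^{m_ik_i}$, so the hypothesis says that the coefficient vector $(a_k)_{k\in\I}$ lies in the kernel of the Kronecker product $V_1\otimes\cdots\otimes V_s$, where $V_i=(\alpha_i^{jk})_{0\le j,k<r_i}$ is a Vandermonde matrix on the pairwise distinct nodes $1,\alpha_i,\dots,\alpha_i^{r_i-1}$. Each $V_i$ is invertible, hence so is the product, and $(a_k)_{k\in\I}=0$. The only point requiring care in either approach is bookkeeping: one must use the canonical representative to bound the partial degrees by $r_i-1$, and invoke that $\alpha_i$ being a primitive $r_i$-th root of unity makes its first $r_i$ powers distinct. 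Beyond that there is no genuine obstacle, as this is the standard multivariate Vandermonde/DFT invertibility statement over the field $\L$.
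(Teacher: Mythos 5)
Your proof is correct and follows essentially the same route as the paper: induction on $s$, reducing via the decomposition $f=\sum_k g_k(X_1,\dots,X_{s-1})X_s^k$ to a univariate polynomial of degree at most $r_s-1$ with $r_s$ distinct roots, then applying the induction hypothesis to each coefficient $g_k$. Your version is in fact slightly more careful than the paper's (you make explicit why the powers of $\alpha_i$ are distinct and why the partial degrees are bounded), and the Vandermonde--Kronecker reformulation you add is a valid one-step alternative, but the core argument is the same.
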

\begin{proof}
	We set $\bs \alpha =(\alpha_1,\dots,\alpha_s)$ and we proceed by induction on $s$:
	
	For $s=1$ is obvious by classical arguments on degrees of polynomials in one variable. Suppose the result is true for $s-1$ and we have to see for $s$. 
	
	Let $f(\X) = \sum_{j=0}^{r_s-1}f_j(X_1,\dots,X_{s-1})X_s^j$ be such that $f(\bs{\alpha}^m)=0$ for all $m\in \I$. If for each $m \in \I$ we set  $f_{\alpha_1^{m_1},\dots, \alpha_{s-1}^{m_s-1}}= \sum_{j=0}^{r_s-1}f_j(\alpha_1^{m_1},\dots, \alpha_{s-1}^{m_s-1})X_s^j$, we see that it has the roots
	$\alpha_s^0,\dots, \alpha_s^{r_s-1}$ ($r_s$-roots) and it has degree at most $r_s-1$; so that $f_{\alpha_1^{m_1},\dots, \alpha_{s-1}^{m_s-1}}= 0$. This means that $f_j(X_1,\dots,X_{s-1})$ vanishes in $\bs{\alpha}^m$ for all $j = 0,\dots,s-1$. Hence, by induction hypothesis, $f_j(X_1,\dots,X_{s-1}) = 0$ and then $f(\X) = \sum_{j=0}^{r_s-1} 0 \cdot X_s^j = 0$.
\end{proof}

We denote by $\Q$ the collection of all subsets of $\I$ that are unions of $q$-orbits. We consider the usual partial ordering in $\Q$, given by set inclusion. Note that if $r_i\geq 2$ for all $i=1,\dots,s$ then $|\Q|\geq 3$, as $Q(0,0)$, $Q(1,0)$ and $Q(0,1)$ are disjoint sets.

First, we settle the correspondence $C\mapsto \D_{\bs\alpha}(C)\in \Q$. We begin by listing some properties that may be proved in a direct way.

\begin{proposition}\label{propiedades con def inclusion y suma}
	Let $C$ and $D$ codes in $\F(r_1,\dots,r_s)$, with defining sets $\D_{\bs\alpha}(C)$ and $\D_{\bs\alpha}(D)$, respectively. Then
	\begin{enumerate}
		\item If $C\subseteq D$ then $\D_{\bs\alpha}(D)\subseteq \D_{\bs\alpha}(C)$.
		\item  $\D_{\bs\alpha}(C + D)=\D_{\bs\alpha}(C)\cap\D_{\bs\alpha}(D)$.
	\end{enumerate} 
\end{proposition}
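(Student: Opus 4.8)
The plan is to prove both parts directly from the definitions, using the characterization of zeros via evaluation at the points $\bs\alpha^m$. Throughout, recall that $\D_{\bs\alpha}(C)=\{m\in\I\tq f(\bs\alpha^m)=0 \text{ for all } f\in C\}$, so membership of $m$ in the defining set is exactly the assertion that every element of the code vanishes at the point $\bs\alpha^m\in R$.

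For part (1): assume $C\subseteq D$. If $m\in\D_{\bs\alpha}(D)$, then $f(\bs\alpha^m)=0$ for every $f\in D$; since $C\subseteq D$, in particular $f(\bs\alpha^m)=0$ for every $f\in C$, so $m\in\D_{\bs\alpha}(C)$. This gives $\D_{\bs\alpha}(D)\subseteq\D_{\bs\alpha}(C)$. This is immediate and needs no further machinery.

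For part (2): I would argue by double inclusion. For ``$\subseteq$'': since $C\subseteq C+D$ and $D\subseteq C+D$, part (1) gives $\D_{\bs\alpha}(C+D)\subseteq\D_{\bs\alpha}(C)$ and $\D_{\bs\alpha}(C+D)\subseteq\D_{\bs\alpha}(D)$, hence $\D_{\bs\alpha}(C+D)\subseteq\D_{\bs\alpha}(C)\cap\D_{\bs\alpha}(D)$. For ``$\supseteq$'': let $m\in\D_{\bs\alpha}(C)\cap\D_{\bs\alpha}(D)$. Take any $h\in C+D$ and write $h=f+g$ with $f\in C$, $g\in D$. Evaluating the canonical representatives at $\bs\alpha^m$ (here one should note that evaluation at a point of $R$ is well-defined on $\F(r_1,\dots,r_s)$, because the defining relations $X_i^{r_i}-1$ vanish at any $r_i$-th root of unity, and that it is additive), we get $h(\bs\alpha^m)=f(\bs\alpha^m)+g(\bs\alpha^m)=0+0=0$. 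Since $h$ was arbitrary, $m\in\D_{\bs\alpha}(C+D)$.

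The only subtlety — and the step I would be most careful about — is the well-definedness and additivity of the evaluation map $f\mapsto f(\bs\alpha^m)$ on the quotient algebra $\F(r_1,\dots,r_s)$, since elements are being manipulated as canonical representatives in $\F[\X]$ and the sum $f+g$ of two canonical representatives need not itself be in canonical form (reduction modulo the $X_i^{r_i}-1$ may be required). This causes no real problem: evaluation at a point of $R$ factors through the quotient because each generator $X_i^{r_i}-1$ maps to $0$ at an $r_i$-th root of unity, so the value $h(\bs\alpha^m)$ is independent of the chosen representative and the identity $h(\bs\alpha^m)=f(\bs\alpha^m)+g(\bs\alpha^m)$ holds. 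With this observation in place, both inclusions are routine, and the proof is complete.
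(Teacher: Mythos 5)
Your proof is correct and is exactly the direct verification from the definitions that the paper has in mind (the paper omits the proof, calling these properties provable ``in a direct way''). Your extra remark on the well-definedness and additivity of evaluation at points of $R$ on the quotient algebra is a sensible precaution but does not change the argument.
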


As a consequence we have the following corollary.

\begin{corollary}\label{suma de primitivos interseccion de conjuntos def}
	Let $e_1,\dots,e_t$ be the complete set of primitive orthogonal idempotents of $\F(r_1,\dots,r_s)$, and write $C=\sum_{k=1}^l Re_{i_k}$. Then $\D_{\bs\alpha}(C)=\cap_{k=1}^l\D_{\bs\alpha}(Re_{i_k})$
\end{corollary}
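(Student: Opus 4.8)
The plan is to obtain this as a direct consequence of Proposition~\ref{propiedades con def inclusion y suma}(2) by a straightforward induction on $l$, the number of primitive idempotents appearing in the sum. The base case $l=1$ is trivial, since then $C = \F(r_1,\dots,r_s)e_{i_1}$ and the claimed equality reads $\D_{\bs\alpha}(C) = \D_{\bs\alpha}(C)$. For the inductive step, write $C = \left(\sum_{k=1}^{l-1} \F(r_1,\dots,r_s)e_{i_k}\right) + \F(r_1,\dots,r_s)e_{i_l}$; the first summand and the second are both codes (ideals) in $\F(r_1,\dots,r_s)$, so Proposition~\ref{propiedades con def inclusion y suma}(2) applies and gives
\[
\D_{\bs\alpha}(C) = \D_{\bs\alpha}\!\left(\sum_{k=1}^{l-1} \F(r_1,\dots,r_s)e_{i_k}\right) \cap \D_{\bs\alpha}\!\left(\F(r_1,\dots,r_s)e_{i_l}\right).
\]
Applying the induction hypothesis to the first factor turns this into $\left(\bigcap_{k=1}^{l-1}\D_{\bs\alpha}(\F(r_1,\dots,r_s)e_{i_k})\right)\cap \D_{\bs\alpha}(\F(r_1,\dots,r_s)e_{i_l}) = \bigcap_{k=1}^{l}\D_{\bs\alpha}(\F(r_1,\dots,r_s)e_{i_k})$, which is the desired identity (here $R$ in the statement denotes the ambient algebra $\F(r_1,\dots,r_s)$).

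The only point that needs a word of care is that Proposition~\ref{propiedades con def inclusion y suma}(2) is stated for a sum of \emph{two} codes, so the induction must peel off one idempotent at a time rather than invoke a general finite-intersection form directly; since finite sums of ideals are again ideals, this causes no difficulty. I do not foresee a genuine obstacle here: everything reduces to Proposition~\ref{propiedades con def inclusion y suma} plus associativity of intersection, and the semisimplicity of $\F(r_1,\dots,r_s)$ guarantees the complete set of primitive orthogonal idempotents $e_1,\dots,e_t$ exists so that the expression $C=\sum_{k=1}^l \F(r_1,\dots,r_s)e_{i_k}$ is meaningful in the first place.
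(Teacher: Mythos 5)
Your proof is correct and matches the paper's intent: the paper presents this corollary as an immediate consequence of Proposition~\ref{propiedades con def inclusion y suma}(2), and your explicit induction on $l$ simply fills in the routine iteration of that two-summand statement.
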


Note that the intersection of all defining sets is  $\cap_{k=1}^n\D_{\bs\alpha}(Re_{i_k})=\emptyset$. 

Now we shall prove injectivity. First we prove the following lemma whose proof is straightforward.

\begin{lemma}\label{igualdad conj de definicion}
	Let $C$ be a code in $\F(r_1,\dots,r_s)$ and $\bs\alpha\in U$. Consider the ideal (code)
	\[D=\{f\in \F(r_1,\dots,r_s)\tq f(\bs\alpha^m)=0,\;\text{for all }m\in \D_{\bs\alpha}(C)\}.\]
	Then $\D_{\bs\alpha}(C)=\D_{\bs\alpha}(D)$
\end{lemma}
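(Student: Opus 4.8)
The goal is to show $\D_{\bs\alpha}(C)=\D_{\bs\alpha}(D)$, where $D=\{f\in\F(r_1,\dots,r_s)\tq f(\bs\alpha^m)=0\text{ for all }m\in\D_{\bs\alpha}(C)\}$. First I would check that $D$ is indeed an ideal: it is clearly an $\F$-subspace, and if $f\in D$ and $g\in\F(r_1,\dots,r_s)$ then $(gf)(\bs\alpha^m)=g(\bs\alpha^m)f(\bs\alpha^m)=0$ for every $m\in\D_{\bs\alpha}(C)$, using that evaluation at a fixed point $\bs\alpha^m\in R$ is a ring homomorphism on $\F(r_1,\dots,r_s)$ (well-defined since $\bs\alpha^m$ is a tuple of roots of unity of the right orders). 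So $D$ is a code and $\D_{\bs\alpha}(D)$ makes sense.

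The inclusion $C\subseteq D$ is immediate: if $f\in C$ then by definition of $\mathcal{Z}(C)$ and $\D_{\bs\alpha}(C)$ we have $f(\bs\alpha^m)=0$ for every $m\in\D_{\bs\alpha}(C)$, hence $f\in D$. By Proposition~\ref{propiedades con def inclusion y suma}(1) this yields $\D_{\bs\alpha}(D)\subseteq\D_{\bs\alpha}(C)$. For the reverse inclusion, take $m\in\D_{\bs\alpha}(C)$; I must show $m\in\D_{\bs\alpha}(D)$, i.e.\ that $f(\bs\alpha^m)=0$ for every $f\in D$. But this is exactly the defining condition of $D$: every $f\in D$ vanishes at $\bs\alpha^{m'}$ for all $m'\in\D_{\bs\alpha}(C)$, in particular at $\bs\alpha^m$. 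Hence $\bs\alpha^m\in\mathcal{Z}(D)$, so $m\in\D_{\bs\alpha}(D)$, giving $\D_{\bs\alpha}(C)\subseteq\D_{\bs\alpha}(D)$ and therefore equality.

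There is essentially no obstacle here; the only point requiring a moment's care is that the defining set is defined relative to the \emph{fixed} $\bs\alpha\in U$, so one must consistently use the same $\bs\alpha$ when passing between $C$, $D$, and their zero sets — but since $\bs\alpha$ is fixed throughout, this is automatic. The argument is purely formal, unwinding the definitions of $\mathcal{Z}(\cdot)$ and $\D_{\bs\alpha}(\cdot)$ together with the monotonicity statement in Proposition~\ref{propiedades con def inclusion y suma}, which is why the paper flags the proof as straightforward.
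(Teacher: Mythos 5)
Your argument is correct and is exactly the routine unwinding of definitions that the paper has in mind when it labels the proof ``straightforward'' and omits it: $C\subseteq D$ by definition of $\D_{\bs\alpha}(C)$, giving $\D_{\bs\alpha}(D)\subseteq\D_{\bs\alpha}(C)$ via Proposition~\ref{propiedades con def inclusion y suma}(1), while the reverse inclusion $\D_{\bs\alpha}(C)\subseteq\D_{\bs\alpha}(D)$ is immediate from the defining condition of $D$. Nothing to add.
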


Next lemma is crucial to prove injectivity.

\begin{lemma}\label{De es maximal}
	Let $e\in \F(r_1,\dots,r_s)$ be a primitive idempotent. Then $\D_{\bs\alpha}(Re)$ is maximal in $\Q$.
	
	Conversely, if $\I\neq Q\in \Q$ is a maximal element, then $C=\{f\in \F(r_1,\dots,r_s)\tq f(\bs\alpha^m)=0,\;\text{for all }m\in Q\}$ is an ideal such that $C=Re$ with $e$ a primitive idempotent.
\end{lemma}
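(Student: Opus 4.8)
The plan is to exploit the semisimple decomposition $R = \F(r_1,\dots,r_s) = \prod_{k=1}^t Re_k$ together with the order-reversing correspondence from Proposition~\ref{propiedades con def inclusion y suma} and Corollary~\ref{suma de primitivos interseccion de conjuntos def}. For the first assertion, I would argue by contradiction: suppose $\D_{\bs\alpha}(Re)$ is not maximal in $\Q$, so there is a code $D$ (necessarily of the form $D=\sum_{k}Re_{i_k}$, since every code is a sum of the simple ideals $Re_k$) with $\D_{\bs\alpha}(Re)\subsetneq \D_{\bs\alpha}(D)\subsetneq \I$. By Corollary~\ref{suma de primitivos interseccion de conjuntos def}, $\D_{\bs\alpha}(D)=\bigcap_{k}\D_{\bs\alpha}(Re_{i_k})$, and since $\D_{\bs\alpha}(D)\neq\I$ the family of indices is nonempty; moreover each $\D_{\bs\alpha}(Re_{i_k})\supseteq\D_{\bs\alpha}(D)\supsetneq\D_{\bs\alpha}(Re)$. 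But the containment $\D_{\bs\alpha}(Re)\subseteq \D_{\bs\alpha}(Re_{i_k})$ forces $Re_{i_k}\subseteq Re$ by the (soon-to-be-established) injectivity direction — or, more elementarily, because $e$ is a primitive idempotent, $Re$ is a minimal ideal, and $e_{i_k}e$ is then either $0$ or a generating idempotent of $Re$. Comparing dimensions (equivalently, counting that $\dim_\F Re = n - |\D_{\bs\alpha}(Re)|$, which I would record as a separate remark or derive from Lemma~\ref{polinomio cero caracterizado}) shows $Re_{i_k}=Re$, whence $\D_{\bs\alpha}(Re_{i_k})=\D_{\bs\alpha}(Re)$, a contradiction. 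Hence $\D_{\bs\alpha}(Re)$ is maximal.

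For the converse, let $\I\neq Q\in\Q$ be maximal and set $C=\{f\in R\tq f(\bs\alpha^m)=0\text{ for all }m\in Q\}$, which is an ideal since it is the intersection of the kernels of the evaluation maps $f\mapsto f(\bs\alpha^m)$. Being an ideal of a semisimple ring, $C=\sum_{k\in S}Re_k$ for some index set $S$. First I would check that $\D_{\bs\alpha}(C)=Q$: the inclusion $Q\subseteq\D_{\bs\alpha}(C)$ is immediate from the definition of $C$, and for the reverse inclusion one invokes Lemma~\ref{igualdad conj de definicion} (with the roles arranged so that $C$ here plays the role of $D$ there) together with maximality of $Q$ — indeed $\D_{\bs\alpha}(C)\in\Q$, $\D_{\bs\alpha}(C)\supseteq Q$, and $\D_{\bs\alpha}(C)\neq\I$ because $C\neq 0$ (if $C=0$ then $\D_{\bs\alpha}(C)=\I$, contradicting $Q\subseteq\I$, $Q\neq\I$ unless $Q=\I$; one checks $C\neq 0$ by a dimension count, since $|Q|<n$ leaves room for a nonzero polynomial vanishing on $\{\bs\alpha^m: m\in Q\}$ via interpolation). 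Maximality then gives $\D_{\bs\alpha}(C)=Q$. Finally, to see $C=Re$ with $e$ primitive, it suffices to show $|S|=1$: if $S$ contained two indices, then $C\subsetneq Re_k$ for any single $k\in S$, so $\D_{\bs\alpha}(Re_k)\subseteq\D_{\bs\alpha}(C)=Q$ with $\D_{\bs\alpha}(Re_k)\neq\I$; but $Re_k$ is a strictly larger code, hence $\D_{\bs\alpha}(Re_k)\subsetneq Q$, contradicting maximality of $Q$ in $\Q$ once we note $\D_{\bs\alpha}(Re_k)\in\Q$ and it is strictly below the maximal $Q$ — wait, that is consistent; the contradiction instead is that $\D_{\bs\alpha}(Re_k)$ is itself maximal by the first part, so it cannot sit strictly below $Q\neq\I$. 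Therefore $|S|=1$ and $C=Re_k$ with $e_k$ primitive.

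The main obstacle I anticipate is making the dimension bookkeeping rigorous without circularity: the cleanest tool is the formula $\dim_\F(Re) = n - |\D_{\bs\alpha}(Re)|$ for a primitive idempotent (equivalently, that the evaluation map $R\to\L^{\I}$, $f\mapsto(f(\bs\alpha^m))_{m\in\I}$, is injective by Lemma~\ref{polinomio cero caracterizado} and in fact an isomorphism onto $\L^{\I}$ after base change, so that $\dim C = |\I| - |\D_{\bs\alpha}(C)|$ for every code $C$). Once that is in hand, maximality of $\D_{\bs\alpha}(Re)$ is equivalent to minimality of $\dim Re$ among nonzero codes, i.e. to $Re$ being a minimal ideal, i.e. to $e$ being primitive — so the lemma becomes essentially a restatement. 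I would therefore front-load a short remark establishing $\dim_\F C = n - |\D_{\bs\alpha}(C)|$ and then present both directions of the lemma as quick consequences of this identity together with Corollary~\ref{suma de primitivos interseccion de conjuntos def}.
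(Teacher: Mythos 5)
Your plan rests on a different pillar than the paper's: you want to front-load the dimension formula $\dim_\F C = n - |\D_{\bs\alpha}(C)|$ (the content of Theorem~\ref{dimension codigos conj def}, proved via the DFT and scalar extension) and turn the lemma into counting. The paper instead proves this lemma with no reference to dimensions at all: it uses only Lemma~\ref{polinomio cero caracterizado}, Corollary~\ref{suma de primitivos interseccion de conjuntos def}, and one observation you never invoke, namely that if $e, e'$ are distinct primitive idempotents then $ee'=0$, so for any $m$ with $e'(\bs\alpha^m)\neq 0$ the factor $e(\bs\alpha^m)$ must vanish. Concretely: assuming $\D_{\bs\alpha}(Re)\subseteq\D_{\bs\alpha}(Re_{i_k})$ with $e_{i_k}\neq e$, pick (by Lemma~\ref{polinomio cero caracterizado}) some $m$ with $e_{i_k}(\bs\alpha^m)\neq 0$; the inclusion forces $e(\bs\alpha^m)\neq 0$ too, contradicting $(ee_{i_k})(\bs\alpha^m)=0$. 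Both directions of the lemma reduce to this one-line trick. Since the dimension formula does not logically depend on this lemma in the paper (it rests on Theorem~\ref{isomorfismo de algebras y dimension ideales} and Lemma~\ref{las tres propiedades de la DFT}), your proposed reordering is not circular, and once the dimension formula is available your ``restatement'' observation is correct; but it is architecturally the reverse of what the paper does, and it imports tensor-product and DFT machinery into what is presented as the elementary correspondence section.

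As written, though, your plan has two concrete gaps you would need to close. First, in the forward direction you say ``suppose $\D_{\bs\alpha}(Re)$ is not maximal in $\Q$, so there is a code $D$ with $\D_{\bs\alpha}(Re)\subsetneq\D_{\bs\alpha}(D)\subsetneq\I$''; this silently assumes that every intermediate $Q\in\Q$ is the defining set of some code, which is precisely the surjectivity established only later (Proposition~\ref{la correspondencia es sobre}). The paper avoids this by building $C=\{f\tq f(\bs\alpha^m)=0\text{ for all }m\in Q\}$ directly from the abstract $Q\in\Q$ and observing $Q\subseteq\D_{\bs\alpha}(C)$; you should do the same. Second, in the converse your argument that $|S|=1$ stalls: after correcting the sign of the inclusion (if $k\in S$ then $Re_k\subseteq C$, not $C\subsetneq Re_k$) and applying maximality, you only get $\D_{\bs\alpha}(Re_k)=Q$ for every $k\in S$; to pass from equality of defining sets to $|S|=1$ you need either the orthogonality-plus-evaluation contradiction above (the paper's route) or, in your framework, the dimension count $|S|\cdot(n-|Q|)=\dim C=n-|Q|$ with $n-|Q|>0$. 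Appealing to ``$\D_{\bs\alpha}(Re_k)$ is itself maximal'' does not by itself yield a contradiction.
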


\begin{proof} Let $e_1,\dots,e_t$ be the complete set of orthogonal idempotents of the ring $\F(r_1,\dots,r_s)$. Suppose that $\D_{\bs\alpha}(Re)\subseteq Q\subseteq \I$. We set $C=\{f\in \F(r_1,\dots,r_s)\tq f(\bs\alpha^m)=0,\;\text{for all }m\in Q\}$. Then $Q\subseteq \D_{\bs\alpha}(C)$, so that $\D_{\bs\alpha}(Re)\subseteq \D_{\bs\alpha}(C)$. We write $C=\sum_{k=1}^l Re_{i_k}$. By Corollary~\ref{suma de primitivos interseccion de conjuntos def} one has that $\D_{\bs\alpha}(C)=\cap_{k=1}^l \D_{\bs\alpha}(Re_{i_k})$; from which $\D_{\bs\alpha}(Re)\subset \D_{\bs\alpha}(Re_{i_k})$ for $k=1,\dots l$. Now suppose that there is a primitive idempotent  $e_{i_k}\neq e$. As they are primitive idempotents it must happen that $ee_{i_k}=0$. Since $e_{i_k}\neq 0$, by Lemma~\ref{polinomio cero caracterizado}, there exists $m\in \I$  such that $m\not\in\D_{\bs\alpha}(Re_{i_k})$, so that $e_{i_k}(\bs\alpha^m)\neq 0$ and $e(\bs\alpha^m)\neq 0$. A contradiction.

Conversely, suppose that $Q$ is maximal in $\Q$. Then, there exists $a\in \I$ such that $Q\cup Q(a)=\I$. Now, as we have pointed out, it happens that $\cap_{k=1}^t\D_{\bs\alpha}(Re_{i_k})=\emptyset$; so that there must exist a primitive idempotent, say $e=e_{i_k}$, such that $Q(a)\nsubseteq \D_{\bs\alpha}(Re)$ and by statement \textit{(1)} of this lemma, we have $Q(a)\cup \D_{\bs\alpha}(Re)=\I$. From here, $Q= \D_{\bs\alpha}(Re)$ and by Lemma~\ref{igualdad conj de definicion}, $\D_{\bs\alpha}(C)=\D_{\bs\alpha}(Re)$, so that $C\neq 0$. It remains to see that $C=Re$. Suppose that there exists $e'=e_{i_{k'}}$  with $k\neq k'$ such that $Re'\subset C$. Then $\D_{\bs\alpha}(Re')=\D_{\bs\alpha}(C)$, by maximality, and hence $\D_{\bs\alpha}(Re')=\D_{\bs\alpha}(Re)$, but $ee'=0$. As in the proof of paragraph above, we get a contradiction.
\end{proof}

Now we are going to prove the injectivity of the correspondence  $C\mapsto \D_{\bs\alpha}(C)\in \Q$. We begin by showing it for primitive idempotents. 

\begin{corollary}\label{inyectiva para los primitivos}
	Let $e,f\in \F(r_1,\dots,r_s)$ be primitive idempotents. If $\D_{\bs\alpha}(Re)=\D_{\bs\alpha}(Rf)$ then $e=f$.
\end{corollary}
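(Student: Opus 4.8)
The plan is to derive this from Lemma~\ref{De es maximal} together with the maximality-based characterization it provides. First I would invoke Lemma~\ref{De es maximal} to note that both $\D_{\bs\alpha}(Re)$ and $\D_{\bs\alpha}(Rf)$ are maximal elements of $\Q$; since they are assumed equal, set $Q=\D_{\bs\alpha}(Re)=\D_{\bs\alpha}(Rf)$, a single maximal element of $\Q$. The converse part of Lemma~\ref{De es maximal} then applies to this $Q$: it produces a \emph{uniquely determined} ideal $C=\{h\in \F(r_1,\dots,r_s)\tq h(\bs\alpha^m)=0 \text{ for all } m\in Q\}$, and asserts $C=Re''$ for some primitive idempotent $e''$.

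The key step is then to show $Re=C=Rf$, which forces $e=f$ because the primitive idempotent generating a given minimal ideal is unique (distinct primitive orthogonal idempotents generate distinct ideals, and any primitive idempotent of $\F(r_1,\dots,r_s)$ belongs to the fixed complete set $e_1,\dots,e_t$). To see $Re\subseteq C$: every $h\in Re$ vanishes on $\bs\alpha^m$ for all $m\in \D_{\bs\alpha}(Re)=Q$ by the very definition of the defining set, so $h\in C$. Conversely, $C\subseteq Re$: by Lemma~\ref{igualdad conj de definicion} applied with the code $Re$, we have $\D_{\bs\alpha}(C)=\D_{\bs\alpha}(Re)=Q$, which is maximal; writing $C=\sum_{k=1}^{l}Re_{i_k}$ and using Corollary~\ref{suma de primitivos interseccion de conjuntos def}, $Q=\cap_{k=1}^{l}\D_{\bs\alpha}(Re_{i_k})\subseteq \D_{\bs\alpha}(Re_{i_k})$ for each $k$; maximality of $Q$ gives $\D_{\bs\alpha}(Re_{i_k})=Q=\D_{\bs\alpha}(Re)$, and then the argument in the proof of Lemma~\ref{De es maximal} (using $ee_{i_k}=0$ for $e_{i_k}\neq e$ and Lemma~\ref{polinomio cero caracterizado}) rules out any summand $e_{i_k}\neq e$, so $C=Re$. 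The same reasoning with $f$ in place of $e$ gives $C=Rf$, hence $Re=Rf$ and therefore $e=f$.

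Alternatively, and more cleanly, one can avoid re-running the argument twice: the relation $\D_{\bs\alpha}(Re)=\D_{\bs\alpha}(Rf)$ means $e$ and $f$ vanish on exactly the same set of points $\bs\alpha^m$, $m\in\I$; hence $e-f$ vanishes on $\bs\alpha^m$ for all $m\in\I$, so by Lemma~\ref{polinomio cero caracterizado} we conclude $e-f=0$ in $\F[\X]$, i.e.\ $e=f$. I would present this short argument as the main line and mention the maximality route as motivation.

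The main obstacle, if any, is purely bookkeeping: one must be careful that $\D_{\bs\alpha}(Re)=\D_{\bs\alpha}(Rf)$ genuinely encodes ``same vanishing set in $R$'' and not merely ``same vanishing set among the $\bs\alpha^m$''—but since $\bs\alpha\in U$, the points $\{\bs\alpha^m\tq m\in\I\}$ are precisely what $\mathcal{Z}(\cdot)$ sees through the defining set, so Lemma~\ref{polinomio cero caracterizado} applies directly to $e-f$. No genuine difficulty remains beyond invoking this lemma.
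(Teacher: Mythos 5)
Your first route — going through Lemma~\ref{De es maximal}, showing that the ideal $C$ of polynomials vanishing on the common maximal defining set $Q$ is minimal, and that $Re=C=Rf$ — is exactly the paper's proof with the details filled in (the paper simply says ``By Lemma~\ref{De es maximal} one has that $Re=Rf$''). That part is fine.

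Your preferred ``alternative'' argument, however, has a gap as written. From $\D_{\bs\alpha}(Re)=\D_{\bs\alpha}(Rf)$ you do know that $e$ and $f$ have the same zero set among the points $\bs\alpha^m$, $m\in\I$; but for $m$ outside that common zero set you only know $e(\bs\alpha^m)\neq 0$ and $f(\bs\alpha^m)\neq 0$, which for two arbitrary polynomials does not give $e(\bs\alpha^m)=f(\bs\alpha^m)$. So the step ``hence $e-f$ vanishes on $\bs\alpha^m$ for all $m\in\I$'' does not follow from equality of vanishing loci alone. What saves it is that $e$ and $f$ are idempotents: $e^2=e$ in $\F(r_1,\dots,r_s)$ forces $e(\bs\alpha^m)^2=e(\bs\alpha^m)$ in $\L$, so $e(\bs\alpha^m)\in\{0,1\}$, and likewise for $f$ (this is essentially item~(\ref{idempotentes en L}) of Lemma~\ref{las tres propiedades de la DFT}, recorded later in the paper). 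With that observation inserted, both $e$ and $f$ take the value $1$ wherever they are nonzero, so $e-f$ does vanish at every $\bs\alpha^m$, and Lemma~\ref{polinomio cero caracterizado} gives $e=f$. Once patched, this is a genuinely different and arguably cleaner argument than the paper's: it bypasses Lemma~\ref{De es maximal} entirely and, as a bonus, does not even use that $e$ and $f$ are primitive — only that they are idempotent.
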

\begin{proof}
	By Lemma~\ref{De es maximal} one has that $Re=Rf$ and, as they are primitive idempotents then $e=f$.
\end{proof}

For a subset $Q\subset \I$, we denote $\widehat{Q}=\I\setminus Q$; that is, its complement in $\I$. From results above it follows that if $e_1,\dots,e_t$ is the complete set of primitive idempotents of $\F(r_1,\dots,r_s)$ then it has exactly $m_1,\dots,m_t$ representatives of all distinct $q$-orbits such that $\widehat{\D_{\bs\alpha}(Re_i)}=Q(m_i)$ by reordering adequately all indexes.

\begin{proposition}\label{Correspondencia inyectiva}
	Let $C$ and $D$ be codes in $\F(r_1,\dots,r_s)$. If $C\varsubsetneq D$ then $\D_{\bs\alpha}(D)\varsubsetneq \D_{\bs\alpha}(C)$.
\end{proposition}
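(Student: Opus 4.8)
The inclusion $\D_{\bs\alpha}(D)\subseteq\D_{\bs\alpha}(C)$ is already provided by Proposition~\ref{propiedades con def inclusion y suma}\textit{(1)}, so the whole content of the statement is the \emph{strictness}. The plan is to pass to the decomposition into primitive idempotents and translate the problem into elementary combinatorics of $q$-orbits, using the remark recorded just before the statement.

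First I would fix the complete set of primitive orthogonal idempotents $e_1,\dots,e_t$ of $\F(r_1,\dots,r_s)$ and reorder the indices so that $\widehat{\D_{\bs\alpha}(Re_i)}=Q(m_i)$, where, as recalled above, $Q(m_1),\dots,Q(m_t)$ are exactly the (pairwise disjoint, nonempty) $q$-orbits partitioning $\I$; equivalently $\D_{\bs\alpha}(Re_i)=\bigcup_{k\neq i}Q(m_k)$. Writing $C=\sum_{k\in S}Re_k$ and $D=\sum_{k\in T}Re_k$ for the (unique, by semisimplicity) index sets $S,T\subseteq\{1,\dots,t\}$, the hypothesis $C\varsubsetneq D$ forces $S\varsubsetneq T$; pick $j\in T\setminus S$.

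Next, by Corollary~\ref{suma de primitivos interseccion de conjuntos def},
\[
\D_{\bs\alpha}(C)=\bigcap_{k\in S}\D_{\bs\alpha}(Re_k)=\I\setminus\bigcup_{k\in S}Q(m_k),\qquad
\D_{\bs\alpha}(D)=\bigcap_{k\in T}\D_{\bs\alpha}(Re_k)=\I\setminus\bigcup_{k\in T}Q(m_k),
\]
with the convention that an empty intersection equals $\I$ (this also covers $C=0$). Since $j\in T$, the orbit $Q(m_j)$ lies in $\bigcup_{k\in T}Q(m_k)$; since $j\notin S$ and distinct $q$-orbits are disjoint, $Q(m_j)$ meets none of the $Q(m_k)$ with $k\in S$, so $Q(m_j)\cap\bigcup_{k\in S}Q(m_k)=\emptyset$. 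As $Q(m_j)\ni m_j$ is nonempty, this shows $\bigcup_{k\in S}Q(m_k)\varsubsetneq\bigcup_{k\in T}Q(m_k)$, and passing to complements in $\I$ yields $\D_{\bs\alpha}(D)\varsubsetneq\D_{\bs\alpha}(C)$.

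The only nontrivial input is the fact, recalled before the statement, that the complement of $\D_{\bs\alpha}(Re_i)$ is a single $q$-orbit and that these orbits exhaust $\I$ as $i$ runs over $1,\dots,t$ — which itself rests on Lemma~\ref{De es maximal} and Corollary~\ref{inyectiva para los primitivos}. Once that is granted, the remaining step is the purely set-theoretic observation that strictly enlarging a subfamily of a fixed partition strictly shrinks the complement of its union, so I do not expect a genuine obstacle; the degenerate cases ($C=0$, or $D$ the whole ring) are absorbed by the empty-intersection convention and need no separate treatment.
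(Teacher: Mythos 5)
Your proof is correct and follows essentially the same route as the paper: decompose both codes as sums of primitive idempotents, apply Corollary~\ref{suma de primitivos interseccion de conjuntos def} and the identification $\widehat{\D_{\bs\alpha}(Re_i)}=Q(m_i)$, and observe that adding one more idempotent removes one nonempty orbit from the defining set. Your phrasing in terms of complements of a partition is a clean reformulation of the same combinatorial step the paper carries out by picking a single $f\in D\setminus C$ (note the paper's proof contains a typo, writing $f\in C\setminus D$ where $D\setminus C$ is meant).
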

\begin{proof}
Let $e_1,\dots,e_t$ be the complete set of primitive orthogonal idemptents of $\F(r_1,\dots,r_s)$. Set $C=\sum_{j=1}^k Re_{i_j}$. By hypothesis, there is $f\in \{e_1,\dots,e_t\}$ such that $f\in C\setminus D$. If we call $Q(m_f)=\widehat{\D_{\bs\alpha}(Rf)}$, then we have, by  Corollary~\ref{inyectiva para los primitivos}, that $Q(m_f)\subset\D_{\bs\alpha}(Re_{i_j})$ for $j=1,\dots,k$ and hence $Q(m_f)\subset\cap_{j=1}^k\D_{\bs\alpha}(Re_{i_j})$; but $Q(m_f)\not\subset\cap_{j=1}^k\D_{\bs\alpha}(Re_{i_j})\cap \D_{\bs\alpha}(Rf)$, so that $\cap_{j=1}^k\D_{\bs\alpha}(Re_{i_j})\cap \D_{\bs\alpha}(Rf)\varsubsetneq \cap_{j=1}^k\D_{\bs\alpha}(Re_{i_j})$.
\end{proof}

Now we prove the injectivity in the general case.

\begin{corollary}\label{la correspondencia es inyectiva}
	Let $C$ and $D$ be codes in $\F(r_1,\dots,r_s)$. Then $C=D$ if and only if $\D_{\bs\alpha}(C)=\D_{\bs\alpha}(D)$.
\end{corollary}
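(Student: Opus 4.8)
The plan is to deduce this final corollary directly from Proposition~\ref{Correspondencia inyectiva}, which already does all the real work. The corollary is a biconditional, so I would split it into the two implications and handle the trivial one first.

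First I would dispense with the forward direction: if $C=D$ then trivially $\D_{\bs\alpha}(C)=\D_{\bs\alpha}(D)$, since the defining set is a function of the code (it is defined purely in terms of the zeros of the elements of the code). Nothing here requires any of the machinery developed above.

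For the converse, suppose $\D_{\bs\alpha}(C)=\D_{\bs\alpha}(D)$ but $C\neq D$. Since $C$ and $D$ are both ideals (codes) in the semisimple ring $\F(r_1,\dots,r_s)$, if they are not equal then one does not contain the other, or they are comparable with strict inclusion; in either case I can produce a strict inclusion between two codes with a common ``reference'' code. More directly: if $C\neq D$, then $C\neq C+D$ or $D\neq C+D$, and in either case Proposition~\ref{Correspondencia inyectiva} applies. Say $C\varsubsetneq C+D$; then $\D_{\bs\alpha}(C+D)\varsubsetneq \D_{\bs\alpha}(C)$. But by Proposition~\ref{propiedades con def inclusion y suma}(2), $\D_{\bs\alpha}(C+D)=\D_{\bs\alpha}(C)\cap\D_{\bs\alpha}(D)=\D_{\bs\alpha}(C)$ (using the hypothesis $\D_{\bs\alpha}(C)=\D_{\bs\alpha}(D)$), which contradicts the strict inclusion just obtained. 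The symmetric case $D\varsubsetneq C+D$ is identical, yielding $\D_{\bs\alpha}(D)\cap\D_{\bs\alpha}(C)\varsubsetneq\D_{\bs\alpha}(D)$, again a contradiction.

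I do not expect any genuine obstacle here: all the substance lives in Lemma~\ref{De es maximal} and Proposition~\ref{Correspondencia inyectiva}. The only thing to be careful about is the bookkeeping step ``$C\neq D$ implies $C\varsubsetneq C+D$ or $D\varsubsetneq C+D$,'' which is immediate since $C+D$ contains both and equals neither when $C\neq D$ (if $C+D=C$ and $C+D=D$ then $C=D$). So the argument reduces to an application of the sum formula for defining sets together with the strict-monotonicity proposition, and the proof is only a few lines.
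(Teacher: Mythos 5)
Your proof is correct, and it takes a slightly different route from the paper's. The paper reduces to Proposition~\ref{Correspondencia inyectiva} by introducing the ``saturation'' ideal
\[
E=\{f\in \F(r_1,\dots,r_s)\tq f(\bs\alpha^m)=0,\;\text{for all }m\in \D_{\bs\alpha}(C)\},
\]
noting via Lemma~\ref{igualdad conj de definicion} that $\D_{\bs\alpha}(E)=\D_{\bs\alpha}(C)=\D_{\bs\alpha}(D)$ and $C,D\subseteq E$, so the strict-monotonicity proposition forces $C=E=D$. You instead use the code $C+D$ as the common upper bound, and Proposition~\ref{propiedades con def inclusion y suma}(2) plus the hypothesis $\D_{\bs\alpha}(C)=\D_{\bs\alpha}(D)$ to conclude $\D_{\bs\alpha}(C+D)=\D_{\bs\alpha}(C)$, contradicting the strict drop predicted by Proposition~\ref{Correspondencia inyectiva} when $C\varsubsetneq C+D$. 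Both arguments hinge on the same strict-monotonicity result; the trade is that your version avoids introducing $E$ and appealing to Lemma~\ref{igualdad conj de definicion}, replacing that step with the sum formula for defining sets, and is arguably a touch more self-contained since $C+D$ is already in hand. One cosmetic remark: your aside about $C$ and $D$ being ``comparable with strict inclusion'' is a red herring and can be cut; the only case split you actually use is $C\varsubsetneq C+D$ or $D\varsubsetneq C+D$, which you justify correctly.
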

\begin{proof}
	Necessity is trivial. Let us see sufficiency. Suppose we have $\D_{\bs\alpha}(C)=\D_{\bs\alpha}(D)$ and set $E=\{f\in \F(r_1,\dots,r_s)\tq f(\bs\alpha^m)=0,\;\text{for all }m\in \D_{\bs\alpha}(C)\}$. Then $C,D\subseteq E$ and by Lemma~\ref{igualdad conj de definicion} $\D_{\bs\alpha}(E)=\D_{\bs\alpha}(C)=\D_{\bs\alpha}(D)$. Finally, Proposition~\ref{Correspondencia inyectiva} tells us that $C=E=D$.
\end{proof}

Once injectivity has been proved, we are going to see that the correspondence $C\mapsto \D_{\bs\alpha}(C)\in \Q$ is onto. To do this, we have to prove that every element of $\Q$ is a defining set.

\begin{proposition}\label{la correspondencia es sobre}
	Let $Q\in\Q$ an arbitrary element. If $C=\{f\in \F(r_1,\dots,r_s)\tq f(\bs\alpha^m)=0,\;\text{for all }m\in Q\}$ then  $\D_{\bs\alpha}(C)=Q$. 
\end{proposition}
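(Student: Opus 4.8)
The plan is to reduce the statement to the one‑orbit case already handled by Lemma~\ref{De es maximal} and then glue the pieces together with Corollary~\ref{suma de primitivos interseccion de conjuntos def}. One inclusion is immediate: if $m\in Q$ then, by the very definition of $C$, every $f\in C$ satisfies $f(\bs\alpha^m)=0$, so $\bs\alpha^m\in\mathcal{Z}(C)$ and $m\in\D_{\bs\alpha}(C)$; hence $Q\subseteq\D_{\bs\alpha}(C)$. All the work goes into the reverse inclusion $\D_{\bs\alpha}(C)\subseteq Q$.

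To obtain it, I would build a code $D$ with $\D_{\bs\alpha}(D)=Q$ and $D\subseteq C$. Let $e_1,\dots,e_t$ be the complete set of primitive orthogonal idempotents of $\F(r_1,\dots,r_s)$. As noted after Corollary~\ref{inyectiva para los primitivos}, the complements $Q(m_i)=\widehat{\D_{\bs\alpha}(Re_i)}$ are single $q$-orbits and, as $i$ ranges over $\{1,\dots,t\}$, they exhaust all the distinct $q$-orbits of $\I$. Since $\widehat Q$ is again a union of $q$-orbits, we may therefore write $\widehat Q=Q(m_{i_1})\cup\cdots\cup Q(m_{i_l})$ for suitable indices, equivalently $Q=\bigcap_{k=1}^{l}\D_{\bs\alpha}(Re_{i_k})$. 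Setting $D=\sum_{k=1}^{l}Re_{i_k}$, Corollary~\ref{suma de primitivos interseccion de conjuntos def} gives $\D_{\bs\alpha}(D)=\bigcap_{k=1}^{l}\D_{\bs\alpha}(Re_{i_k})=Q$. Now $\D_{\bs\alpha}(D)=Q$ means exactly that every $g\in D$ vanishes at $\bs\alpha^m$ for all $m\in Q$, which is precisely the condition defining membership in $C$; hence $D\subseteq C$. Applying Proposition~\ref{propiedades con def inclusion y suma}(1) to $D\subseteq C$ yields $\D_{\bs\alpha}(C)\subseteq\D_{\bs\alpha}(D)=Q$, and combining with the previous paragraph, $\D_{\bs\alpha}(C)=Q$.

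The one genuinely delicate point is the claim that $\widehat Q$ splits as a union of the orbits $\widehat{\D_{\bs\alpha}(Re_i)}$; that is, that every $q$-orbit really occurs as the complement of the defining set of some $Re_i$. This is where the two halves of Lemma~\ref{De es maximal} together with Corollary~\ref{inyectiva para los primitivos} are essential: maximality shows each $\widehat{\D_{\bs\alpha}(Re_i)}$ is a single orbit, the converse half of Lemma~\ref{De es maximal} shows every such orbit‑complement arises from a primitive idempotent, and injectivity guarantees that the $t$ primitive idempotents account for exactly the $t$ distinct $q$-orbits. Once this structural fact is in place the rest is routine bookkeeping; the degenerate cases $Q=\emptyset$ and $Q=\I$ also fit (read the empty union as $\emptyset$ and the empty intersection as $\I$, or simply observe $Q\subseteq\D_{\bs\alpha}(C)\subseteq\I=Q$ when $Q=\I$).
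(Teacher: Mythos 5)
Your proof is correct and follows essentially the same route as the paper: decompose $\widehat{Q}$ into $q$-orbits, identify each as $\widehat{\D_{\bs\alpha}(Re_i)}$ via Lemma~\ref{De es maximal}, and use Corollary~\ref{suma de primitivos interseccion de conjuntos def} to produce $D=\sum_k Re_{i_k}$ with $\D_{\bs\alpha}(D)=Q$. In fact you are slightly more complete than the paper, which stops after computing $\D_{\bs\alpha}(\sum Re_j)=Q$, whereas you explicitly close the argument with $D\subseteq C$ and Proposition~\ref{propiedades con def inclusion y suma}(1) to get $\D_{\bs\alpha}(C)\subseteq Q$.
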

\begin{proof}
	Clearly $Q\subseteq \D_{\bs\alpha}(C)$. Let $m_1,\dots,m_k$ and $u_1,\dots,u_l$ be representatives of the $q$-orbits modulo $(r_1,\dots,r_s)$, such that $m_i\in Q$, for $i=\{1,\dots,k\}$ and $u_j\not\in Q$, for $j=\{1,\dots,l\}$. Then $Q=\cap_{j=1}^l\widehat{Q_q(u_j)}$. By Lemma~\ref{De es maximal}, each $\widehat{Q_q(u_j)}=\D_{\bs\alpha}(Re_j)$, with $e_j$ primitive idempotent, for $j=1,\dots,l$. So, $\D_{\bs\alpha}(\sum_{j=1}^l Re_j)=\cap_{j=1}^l\D_{\bs\alpha}(Re_j)=\cap_{j=1}^l\widehat{Q_q(u_j)}=Q$.
\end{proof}

\begin{theorem}\label{correspondencia biunivoca}
	In the setting and notation of this section, there exists a bijective correspondence between the abelian codes of $\F(r_1,\dots,r_s)$ and the elements of $\Q$, in the following way. For any code $C$ in $\F(r_1,\dots,r_s)=R$ and any  $Q\in\Q$,
	\begin{eqnarray*}
		C & \longrightarrow & \D_{\bs\alpha}(C)\\
		\{f\in R\tq f(\bs\alpha^m)=0,\;\text{for all }m\in Q\}& \longleftarrow & Q.
	\end{eqnarray*}
\end{theorem}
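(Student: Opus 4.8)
The plan is to assemble the theorem directly from the four ingredients already proved, so the ``proof'' is essentially a bookkeeping argument. First I would observe that the assignment $C \mapsto \D_{\bs\alpha}(C)$ is well defined as a map into $\Q$: this is exactly the Remark following Definition~\ref{conjunto de definicion}, which records that every defining set is a disjoint union of $q$-orbits, hence an element of $\Q$. So both arrows in the statement make sense: the top arrow lands in $\Q$, and the bottom arrow produces an ideal (a code) of $\F(r_1,\dots,r_s)$.

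Next I would prove that the two arrows are mutually inverse. One composite is handled by Proposition~\ref{la correspondencia es sobre}: starting from $Q\in\Q$, forming $C=\{f\in R\tq f(\bs\alpha^m)=0 \text{ for all }m\in Q\}$, and then taking $\D_{\bs\alpha}(C)$ returns $Q$ exactly. This shows the top arrow is surjective and that the bottom arrow is a right inverse of it. For the other composite, start from a code $C$, set $Q=\D_{\bs\alpha}(C)$, and let $E=\{f\in R\tq f(\bs\alpha^m)=0 \text{ for all }m\in Q\}$ be the code produced by the bottom arrow; then $C\subseteq E$ and, by Lemma~\ref{igualdad conj de definicion}, $\D_{\bs\alpha}(E)=\D_{\bs\alpha}(C)$. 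Injectivity of the correspondence (Corollary~\ref{la correspondencia es inyectiva}) then forces $E=C$, so the bottom arrow is also a left inverse. Hence the two maps are inverse bijections between the set of abelian codes of $\F(r_1,\dots,r_s)$ and $\Q$, which is the assertion of the theorem.

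Since all the substantive work has been done in the preceding lemmas, corollaries and propositions, I do not expect a genuine obstacle here; the only care needed is to make sure the roles of the two arrows are not conflated. Concretely, one must check that the ideal $E$ constructed in the second composite is literally the image of $Q=\D_{\bs\alpha}(C)$ under the bottom arrow, so that Corollary~\ref{la correspondencia es inyectiva} applies with $C$ and $E$ in place of $C$ and $D$. The mild subtlety is purely notational: the bottom arrow is defined on arbitrary $Q\in\Q$, and we are feeding it the particular $Q$ that happens to be $\D_{\bs\alpha}(C)$; Lemma~\ref{igualdad conj de definicion} is precisely the statement that this is consistent. Writing this out carefully, the theorem follows in a few lines.

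\begin{proof}
By the Remark after Definition~\ref{conjunto de definicion}, for every code $C$ the defining set $\D_{\bs\alpha}(C)$ is a disjoint union of $q$-orbits, hence $\D_{\bs\alpha}(C)\in\Q$; so the top arrow is well defined. The bottom arrow sends $Q\in\Q$ to $C_Q=\{f\in R\tq f(\bs\alpha^m)=0,\;\text{for all }m\in Q\}$, which is an ideal of $R$, hence a code. We check that the two arrows are mutually inverse. Given $Q\in\Q$, Proposition~\ref{la correspondencia es sobre} gives $\D_{\bs\alpha}(C_Q)=Q$, so the composite $Q\mapsto C_Q\mapsto \D_{\bs\alpha}(C_Q)$ is the identity on $\Q$. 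Conversely, let $C$ be a code and put $Q=\D_{\bs\alpha}(C)$ and $E=C_Q=\{f\in R\tq f(\bs\alpha^m)=0,\;\text{for all }m\in \D_{\bs\alpha}(C)\}$. Then $C\subseteq E$ and, by Lemma~\ref{igualdad conj de definicion}, $\D_{\bs\alpha}(E)=\D_{\bs\alpha}(C)$. By Corollary~\ref{la correspondencia es inyectiva}, $C=E$, so the composite $C\mapsto \D_{\bs\alpha}(C)\mapsto C_{\D_{\bs\alpha}(C)}$ is the identity on codes. Therefore the two maps are inverse bijections, as claimed.
\end{proof}
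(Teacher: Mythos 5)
Your proof is correct and follows essentially the same route as the paper, which simply states the result is immediate from Corollary~\ref{la correspondencia es inyectiva} and Proposition~\ref{la correspondencia es sobre}. You have merely unpacked that one-line citation into an explicit check that the two arrows are mutually inverse, invoking Lemma~\ref{igualdad conj de definicion} along the way (a step already embedded in the proof of Corollary~\ref{la correspondencia es inyectiva}).
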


\begin{proof}
	Immediate from Corollary~\ref{la correspondencia es inyectiva} and Proposition~\ref{la correspondencia es sobre}.
\end{proof}

\section{Computing dimension through defining sets and the DFT}

We begin by comment on some well-known facts about tensor products and scalar extensions. Consider a field extension $\L|\F$. It is well-known (see \cite[Exercise 19.3]{anderson-fuller} or \cite[Corollary 1.7.16] {Rowen}) that if $V$ is an $\F$-vector space with basis $\{v_i\}_{i=1}^s$ then $\{1\otimes v_i\}_{i=1}^s$ is a basis for the $\L$-vector space $\L\otimes_\F V$ and as a consequence $\dim_\F V=\dim_\L(\L\otimes V)$.
	
In our setting, the set $\{\X^m\}_{m\in \I}$ is a basis for the $\F$-vector space $\F(r_1,\dots,r_s)$, so that $\{1\otimes \X^m\}_{m\in \I}$ is a basis  for $\L$-vector space $\L\otimes_\F \F(r_1,\dots,r_s)$. It is easy to check that the correspondence $1\otimes \X^m \mapsto \X^m$ induces an isomorphism of $\L$-vector spaces $\L\otimes_\F \F(r_1,\dots,r_s)\cong \L(r_1,\dots,r_s)$. Moreover, in \cite[Example 1.7.21.i]{Rowen}, it is proved that it is, in fact, an isomorphism of $\L$-algebras.

Following the arguments above one may prove that for any ideal $J\leq \F(r_1,\dots,r_s)$, it happens that $\dim_\F J=\dim_\L(\L\otimes_\F J)$. So, we have the following theorem.

\begin{theorem}\label{isomorfismo de algebras y dimension ideales}
	Let $\L|\F$ be a field extension. Then $\L\otimes_{\F}\F(r_1,\dots,r_s)\cong \L(r_1,\dots,r_s)$ as algebras. Moreover, if $J$ is a code in $\F(r_1,\dots,r_s)$  then $$\dim_{\F}(J)=\dim_{\L}(\L\otimes_{\F} J).$$
\end{theorem}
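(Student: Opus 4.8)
The plan is to establish the two assertions separately, since the dimension equality is essentially a corollary of the algebra isomorphism combined with the general linear-algebra facts recalled just before the statement. First I would prove the isomorphism $\L\otimes_{\F}\F(r_1,\dots,r_s)\cong \L(r_1,\dots,r_s)$ of $\L$-algebras. The cleanest route is to use right-exactness of the functor $\L\otimes_\F(-)$ applied to the defining presentation of the $\F$-algebra: writing $A=\F[X_1,\dots,X_s]$ and $I=\langle X_1^{r_1}-1,\dots,X_s^{r_s}-1\rangle$, one has $\F(r_1,\dots,r_s)=A/I$, and tensoring the exact sequence $0\to I\to A\to A/I\to 0$ with $\L$ gives $\L\otimes_\F(A/I)\cong (\L\otimes_\F A)/(\L\otimes_\F I)$. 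Then $\L\otimes_\F A\cong \L[X_1,\dots,X_s]$ canonically (polynomial rings commute with base change), and under this identification the image of $\L\otimes_\F I$ is exactly the ideal of $\L[X_1,\dots,X_s]$ generated by the same $X_i^{r_i}-1$, so the quotient is $\L(r_1,\dots,r_s)$. The map sending $1\otimes\X^m\mapsto\X^m$ is precisely this composite, and it is an $\L$-algebra homomorphism because it respects the multiplication $\X^m\X^{m'}=\X^{m+m'}$ (reduced mod the $r_i$) on both sides; the paper already cites \cite[Example 1.7.21.i]{Rowen} for this, so I would simply invoke that reference for the multiplicativity rather than re-deriving it.

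Next I would prove $\dim_{\F}(J)=\dim_{\L}(\L\otimes_{\F}J)$ for a code $J$. Since $J$ is in particular an $\F$-subspace of the finite-dimensional $\F$-vector space $\F(r_1,\dots,r_s)$, the inclusion $0\to J\to \F(r_1,\dots,r_s)\to \F(r_1,\dots,r_s)/J\to 0$ is a short exact sequence of $\F$-vector spaces, hence split, and applying $\L\otimes_\F(-)$ (exact on vector spaces, or just additive on the splitting) yields $0\to \L\otimes_\F J\to \L\otimes_\F\F(r_1,\dots,r_s)\to \L\otimes_\F(\F(r_1,\dots,r_s)/J)\to 0$, again split. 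Taking a basis $\{v_i\}$ of $J$ and extending to a basis of $\F(r_1,\dots,r_s)$, the recalled fact from \cite[Exercise 19.3]{anderson-fuller} (or \cite[Corollary 1.7.16]{Rowen}) says $\{1\otimes v_i\}$ is part of a basis of the extension, so in particular $\dim_\L(\L\otimes_\F J)=|\{v_i\}|=\dim_\F J$. One then identifies $\L\otimes_\F J$ with its image inside $\L\otimes_\F\F(r_1,\dots,r_s)\cong \L(r_1,\dots,r_s)$, which under the isomorphism above is the ideal of $\L(r_1,\dots,r_s)$ generated by (a generating set of) $J$; this identification uses flatness of $\L$ over $\F$ so that the injection $\L\otimes_\F J\hookrightarrow\L\otimes_\F\F(r_1,\dots,r_s)$ is preserved, which is automatic since every module over the field $\F$ is flat.

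I expect the only genuine subtlety — rather than an outright obstacle — to be the verification that the vector-space isomorphism $1\otimes\X^m\mapsto\X^m$ is actually multiplicative, i.e. that the $\F$-algebra structure is respected and not merely the $\L$-linear structure; but this is handled by noting that both algebras have structure constants determined by addition in $\I=\Z_{r_1}\times\cdots\times\Z_{r_s}$, which are the same $0$'s and $1$'s in $\F\subseteq\L$, and the paper already supplies the reference \cite[Example 1.7.21.i]{Rowen}. A secondary point worth a sentence is that $J$ being an \emph{ideal} (not just a subspace) is what makes $\L\otimes_\F J$ an ideal of $\L(r_1,\dots,r_s)$ — i.e.\ a code — but for the dimension count alone only the subspace structure is used, so the proof of the numerical equality is pure linear algebra over $\F$ together with the previously-established algebra isomorphism to name the ambient space.
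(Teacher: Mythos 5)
Your proposal is correct and follows essentially the same route as the paper: the paper also obtains the isomorphism from the correspondence $1\otimes\X^m\mapsto\X^m$ (citing the same reference of Rowen for multiplicativity) and derives the dimension equality from the fact that a basis of an $\F$-space yields a basis of its scalar extension. Your version merely fills in the details the paper leaves implicit (right-exactness applied to the presentation, and flatness to identify $\L\otimes_\F J$ inside the ambient algebra), which is a welcome elaboration rather than a different argument.
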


Now we review the definition and basic properties of the discrete Fourier transform (DFT, for short) in the context of multivariate polynomials. The reader may see \cite{camion} for a presentation in the context of character theory.

Following the notation settled in Section~\ref{seccion de notacion}, let $f \in \F(r_1,\dots,r_s)$ be an arbitrary element, that we consider as a polynomial. The DFT with respect to $\bs\alpha\in U$ is the polynomial $\varphi_{\bs{\alpha},f}(\X)=\sum_{m\in \I}f(\bs{\alpha}^m)\X^m \in \L(r_1,\dots,r_s)$.

We want to formalize this concept and view it as a map. The first important fact that we recall is that, for any $f \in \F(r_1,\dots,r_s)$, its image verifies $\varphi_{\bs{\alpha},f}(\X)\in \L(r_1,\dots,r_s)$; that is, it belongs to the polynomial factor ring over the extension field, $\L$. Now with respect to arithmetical properties, it is easy to check that, for $f,g \in \F(r_1,\dots,r_s)$, we have $\varphi_{\bs{\alpha},f+g}(\X)=\varphi_{\bs{\alpha},f}(\X)+\varphi_{\bs{\alpha},g}(\X)$; however, the DFT is not multiplicative with respect to the usual polynomial product. This is the reason for which it is considered another multiplication on the codomain. For $f,g \in \L(r_1,\dots,r_s)$, with $f=\sum_{m\in \I}f_m\X^m$ and $g=\sum_{m\in \I}g_m\X^m$ we define $f\star g = \sum_{m\in \I}f_m g_m\X^m$; that is, the product coeficient by coeficient or coordinatewise. We denote this algebra by $(\L^n,\star)$. Now, clearly $\varphi_{\bs{\alpha},f\star g}(\X)= \varphi_{\bs{\alpha},f}(\X)\star \varphi_{\bs{\alpha},g}(\X)$.

One may prove that the DFT, viewed as $\varphi:\L(r_1,\dots,r_s)\flecha (\L^n,\star)$ is an isomorphism of algebras with inverse $\varphi^{-1}_{\bs{\alpha},f}(\X)=\frac{1}{s}\sum_{m\in \I}f(\bs{\alpha}^{-m})\X^m$.
	
In the following lemma, we find three important facts on the DFT that are easy to check.

\begin{lemma}\label{las tres propiedades de la DFT}
 In the setting  above, with $\varphi_{\bs\alpha,-}:\L(r_1,\dots,r_s)\flecha (\L^n,\star)$, let $e\in \F(r_1,\dots,r_s)$ and $f\in (\L^n,\star)$ be idempotent elements in their respective algebras. Then
 \begin{enumerate}
  \item \label{idempotentes en L} If $f=\sum_{m\in I}f_m\X^m$ then we must have $f_m\in \{0,1\}$.
  \item \label{soporte de Phi e} The support $\supp\left(\varphi_{\bs{\alpha},e}\right)=  \I\setminus \D_{\bs\alpha}(\F(r_1,\dots,r_s) e)$.
 \end{enumerate}
\end{lemma}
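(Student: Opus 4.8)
The plan is to verify the two assertions separately, using only the ring/algebra structures already in play. For part~(\ref{idempotentes en L}): in the algebra $(\L^n,\star)$, multiplication is coordinatewise, so an element $f=\sum_{m\in\I}f_m\X^m$ is idempotent precisely when $f_m^2=f_m$ in $\L$ for every $m\in\I$. Since $\L$ is a field, the only solutions of $t^2=t$ are $t=0$ and $t=1$, so $f_m\in\{0,1\}$ for all $m$. I would phrase this as: the idempotents of $(\L^n,\star)$ are exactly the characteristic (``indicator'') vectors of subsets of $\I$. This is the easy half and needs only one line.

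For part~(\ref{soporte de Phi e}): let $e\in\F(r_1,\dots,r_s)$ be idempotent and consider $\varphi_{\bs\alpha,e}(\X)=\sum_{m\in\I}e(\bs\alpha^m)\X^m$. Since $\varphi_{\bs\alpha,-}$ is an algebra homomorphism from $\L(r_1,\dots,r_s)$ to $(\L^n,\star)$ (after the identification $\F(r_1,\dots,r_s)\hookrightarrow\L(r_1,\dots,r_s)$ via the isomorphism of Theorem~\ref{isomorfismo de algebras y dimension ideales}), $\varphi_{\bs\alpha,e}$ is an idempotent of $(\L^n,\star)$; hence by part~(\ref{idempotentes en L}) each coordinate $e(\bs\alpha^m)$ lies in $\{0,1\}$, and in particular $e(\bs\alpha^m)\neq 0$ is equivalent to $e(\bs\alpha^m)=1$. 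Now I would just unwind definitions: by definition of support, $m\in\supp(\varphi_{\bs\alpha,e})$ iff $e(\bs\alpha^m)\neq 0$; and by Definition~\ref{conjunto de definicion}, $m\in\D_{\bs\alpha}(\F(r_1,\dots,r_s)e)$ iff $\bs\alpha^m\in\mathcal Z(\F(r_1,\dots,r_s)e)$ iff $g(\bs\alpha^m)=0$ for all $g\in\F(r_1,\dots,r_s)e$. The one point needing a short argument is that this last condition is equivalent to $e(\bs\alpha^m)=0$: the implication ``$\Rightarrow$'' is immediate since $e\in\F(r_1,\dots,r_s)e$, and for ``$\Leftarrow$'' note that every $g\in\F(r_1,\dots,r_s)e$ has the form $g=he$ for some $h$, so $g(\bs\alpha^m)=h(\bs\alpha^m)e(\bs\alpha^m)=0$ (here I use that evaluation at a fixed point of $R$ is a ring homomorphism $\F[\X]\to\L$, which is classical and compatible with passing to the quotient because $\bs\alpha^m$ is a common root of the $X_i^{r_i}-1$). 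Combining, $m\in\supp(\varphi_{\bs\alpha,e})$ iff $e(\bs\alpha^m)\neq 0$ iff $m\notin\D_{\bs\alpha}(\F(r_1,\dots,r_s)e)$, which is exactly $\supp(\varphi_{\bs\alpha,e})=\I\setminus\D_{\bs\alpha}(\F(r_1,\dots,r_s)e)$.

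I do not anticipate a serious obstacle here; the statement is genuinely routine once one records that $\varphi_{\bs\alpha,-}$ is a $\star$-algebra homomorphism and that evaluation $f\mapsto f(\bs\alpha^m)$ descends to the quotient ring. The only place to be slightly careful is the bookkeeping with the identification $\F(r_1,\dots,r_s)\hookrightarrow\L(r_1,\dots,r_s)$ so that ``$e$ idempotent in $\F(r_1,\dots,r_s)$'' legitimately feeds into the homomorphism $\varphi$, and the (trivial) observation that an algebra homomorphism sends idempotents to idempotents. If one prefers to avoid invoking the homomorphism property, part~(\ref{soporte de Phi e}) can equally well be obtained directly: $e^2=e$ in $\F(r_1,\dots,r_s)$ forces $e(\bs\alpha^m)^2=e(\bs\alpha^m)$ for each $m$ because $\bs\alpha^m$ is a root of each $X_i^{r_i}-1$ and hence evaluation is well defined and multiplicative on the quotient, giving again $e(\bs\alpha^m)\in\{0,1\}$; then the support computation proceeds verbatim as above.
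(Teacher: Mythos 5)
Your proof is correct; the paper does not actually supply a proof of this lemma, describing the assertions as ``easy to check,'' and the verification you give is the natural one. As you yourself observe at the end, the appeal to part~(\ref{idempotentes en L}) inside part~(\ref{soporte de Phi e}) is not strictly needed, since the support computation only requires distinguishing $e(\bs\alpha^m)=0$ from $e(\bs\alpha^m)\neq 0$ and the ideal argument via $g=he$ handles that directly.
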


We are now ready to show the following easy formula to compute the dimension of abelian codes, expressed only in polynomial terms.

\begin{theorem}\label{dimension codigos conj def}
Let $\bs\alpha \in U$ be fixed and let $C$ an abelian code in $\F(r_1,\dots,r_s)$. Then $$\dim_\F(C)= |\I\setminus \D_{\bs\alpha}(C)|=n-|\D_{\bs\alpha}(C)|.$$
\end{theorem}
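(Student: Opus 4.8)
The plan is to reduce the dimension formula to a count of idempotents via the DFT isomorphism, using the bijection between codes and $q$-orbit unions established in Theorem~\ref{correspondencia biunivoca}. First I would write $C=\sum_{k=1}^{l}Re_{i_k}$ as a sum of the primitive orthogonal idempotents it contains; since these idempotents are orthogonal, $C$ decomposes as the internal direct sum $C=\bigoplus_{k=1}^{l}Re_{i_k}$, so $\dim_\F(C)=\sum_{k=1}^{l}\dim_\F(Re_{i_k})$. By Theorem~\ref{isomorfismo de algebras y dimension ideales} we may pass to $\L$ and compute dimensions in $\L(r_1,\dots,r_s)$ instead, and then transport everything through the algebra isomorphism $\varphi_{\bs\alpha,-}:\L(r_1,\dots,r_s)\to(\L^n,\star)$.

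Next I would identify the image of each $Re_{i_k}$ under $\varphi_{\bs\alpha,-}$. Extending scalars, $\L\otimes_\F Re_{i_k}$ is the principal ideal generated by $\varphi_{\bs\alpha,e_{i_k}}$ in $(\L^n,\star)$; but in the coordinatewise-product algebra $(\L^n,\star)$ a principal ideal generated by an idempotent is simply the set of vectors supported on that idempotent's support, whose $\L$-dimension equals $|\supp(\varphi_{\bs\alpha,e_{i_k}})|$. By Lemma~\ref{las tres propiedades de la DFT}(2), $\supp(\varphi_{\bs\alpha,e_{i_k}})=\I\setminus\D_{\bs\alpha}(Re_{i_k})$, which (by the discussion preceding Proposition~\ref{Correspondencia inyectiva}) is a single $q$-orbit $Q(m_{i_k})$, and these orbits are pairwise disjoint. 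Therefore $\dim_\F(C)=\sum_{k=1}^{l}|Q(m_{i_k})|=\bigl|\bigcup_{k=1}^{l}Q(m_{i_k})\bigr|$.

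Finally I would check that $\bigcup_{k=1}^{l}Q(m_{i_k})=\I\setminus\D_{\bs\alpha}(C)$. By Corollary~\ref{suma de primitivos interseccion de conjuntos def}, $\D_{\bs\alpha}(C)=\bigcap_{k=1}^{l}\D_{\bs\alpha}(Re_{i_k})$, so its complement is $\bigcup_{k=1}^{l}\widehat{\D_{\bs\alpha}(Re_{i_k})}=\bigcup_{k=1}^{l}Q(m_{i_k})$, exactly as needed; this gives $\dim_\F(C)=|\I\setminus\D_{\bs\alpha}(C)|$, and since $|\I|=n$ the second equality $n-|\D_{\bs\alpha}(C)|$ follows at once.

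The main obstacle — though it is mostly bookkeeping — is justifying cleanly that the DFT sends the ideal $Re$ to the coordinatewise ideal cut out by $\supp(\varphi_{\bs\alpha,e})$ and that the latter has the stated dimension; this rests on $\varphi_{\bs\alpha,-}$ being an algebra isomorphism together with Lemma~\ref{las tres propiedades de la DFT}(1), which forces the relevant idempotents to be genuine $0/1$ indicator vectors. One should also be slightly careful in the edge case $C=0$ (empty sum of idempotents), where $\D_{\bs\alpha}(C)=\I$ and both sides are $0$, so the formula still holds.
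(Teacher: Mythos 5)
Your argument is correct and follows essentially the same route as the paper: pass to $\L$ via Theorem~\ref{isomorfismo de algebras y dimension ideales}, transport through the DFT isomorphism onto $(\L^n,\star)$, and use Lemma~\ref{las tres propiedades de la DFT} to identify the dimension with $|\supp(\varphi_{\bs\alpha,e})|=|\I\setminus\D_{\bs\alpha}(C)|$. The only difference is that you first decompose $C$ into primitive components and reassemble via Corollary~\ref{suma de primitivos interseccion de conjuntos def} and the disjointness of the orbit complements, whereas the paper applies Lemma~\ref{las tres propiedades de la DFT}(2) directly to the (not necessarily primitive) generating idempotent of $C$, which saves that bookkeeping.
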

\begin{proof}
Let $e\in \F(r_1,\dots,r_s)$ the generating idempotent of $C$. Then $C= \F(r_1,\dots,r_s) e$, and clearly $\L\otimes_\F C=\L(r_1,\dots,r_s) e$. By Theorem~\ref{isomorfismo de algebras y dimension ideales}, $\dim_F(C)=\dim_\L(\L(r_1,\dots,r_s) e)$. If we call $K$ the ideal generated by $\varphi_{\bs{\alpha},e}$ in $(\L^n,\star)$ then, we see immediately from Lemma~\ref{las tres propiedades de la DFT} that $\dim_\L(K)=|\supp\left(\varphi_{\bs{\alpha},e}\right)|$.

Now, as the DFT is an isomorphism of algebras then $\dim_\L(\L(r_1,\dots,r_s) e)=\dim_\L(K)$. Finally,
\begin{eqnarray*}
 \dim_\F(C)&=&\dim_\L(\L(r_1,\dots,r_s) e)=\\ &=&\dim_\L(K)=|\supp\left(\varphi_{\bs{\alpha},e}\right)|= \\
 &=& |\I\setminus \D_{\bs\alpha}(C)|=n-|\D_{\bs\alpha}(C)|.
\end{eqnarray*}
\end{proof}

As a consequence, we may reformulate the computing of the dimension in terms of the DFT.

\begin{corollary}
Let us fix an element $\bs\alpha \in U$ and let $C$ be an abelian code in $\F(r_1,\dots,r_s)$ with generating idempotent $e\in C$. Then $\dim_\F(C)=|\supp\left(\varphi_{\bs{\alpha},e}\right)|$.
\end{corollary}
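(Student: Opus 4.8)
The plan is to obtain this directly from the two results immediately preceding it, with essentially no new work. Since $e$ is the generating idempotent of $C$, we have $C=\F(r_1,\dots,r_s)\,e$, and hence $\D_{\bs\alpha}(C)=\D_{\bs\alpha}(\F(r_1,\dots,r_s)\,e)$. Applying statement \eqref{soporte de Phi e} of Lemma~\ref{las tres propiedades de la DFT} to the idempotent $e$ then gives
\[
\supp\!\left(\varphi_{\bs{\alpha},e}\right)=\I\setminus \D_{\bs\alpha}\!\left(\F(r_1,\dots,r_s)\,e\right)=\I\setminus \D_{\bs\alpha}(C),
\]
so in particular $\left|\supp\!\left(\varphi_{\bs{\alpha},e}\right)\right|=\left|\I\setminus \D_{\bs\alpha}(C)\right|$.

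It then remains only to invoke Theorem~\ref{dimension codigos conj def}, which asserts $\dim_\F(C)=\left|\I\setminus \D_{\bs\alpha}(C)\right|$. Combining the two displays yields $\dim_\F(C)=\left|\supp\!\left(\varphi_{\bs{\alpha},e}\right)\right|$, as claimed. Alternatively, one may simply point to the chain of equalities in the proof of Theorem~\ref{dimension codigos conj def} itself, where the quantity $\left|\supp\!\left(\varphi_{\bs{\alpha},e}\right)\right|$ already appears as an intermediate term equal to $\dim_\F(C)$; the present corollary just isolates that identification and records it with $e$ taken to be the generating idempotent of $C$.

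There is no genuine obstacle here: the only point requiring a word of care is that the statement is phrased for \emph{the} generating idempotent of $C$, so one should note at the outset that this makes $C=\F(r_1,\dots,r_s)\,e$ (rather than merely $C=\F(r_1,\dots,r_s)\,e'$ for some idempotent $e'$ with a possibly different support), which is exactly what is needed to match the hypothesis of Lemma~\ref{las tres propiedades de la DFT}\eqref{soporte de Phi e} and to have $\D_{\bs\alpha}(C)=\D_{\bs\alpha}(\F(r_1,\dots,r_s)\,e)$.
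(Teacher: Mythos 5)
Your proposal is correct and matches the paper's intent: the corollary is stated there as an immediate consequence of Theorem~\ref{dimension codigos conj def}, whose proof already contains the chain $\dim_\F(C)=|\supp(\varphi_{\bs{\alpha},e})|=|\I\setminus\D_{\bs\alpha}(C)|$ via Lemma~\ref{las tres propiedades de la DFT}. Nothing further is needed.
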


An interesting consequence of the relationship between defining sets and the DFT is the following. Let $\bs\alpha \in U$ be fixed and let $C$ an abelian code in $\F(r_1,\dots,r_s)$ with defining set $\D_{\bs\alpha}(C)$. We may compute explicitly the generator idempotent of $C$, as $e=\frac{1}{s}\sum_{m\in \I\setminus \D_{\bs\alpha}(C)}f(\bs{\alpha}^{-m})\X^m$.

\section{Computing dimension through groebner basis}

Groebner bases are an important tool in the study of abelian codes. As in the case of defining sets, essential aspects of such codes, as information sets, bounds for the minimum distance and coding or decoding techniques, may be study by means of Groebner bases and their footprints. The reader may see \cite{chabanne,Sala y Mora} for more information.

In this section we see another interesting application of Groebner bases in the study of Abelian Codes. We shall show that, one may compute the dimension of any Abelian Code, $C$ by means of the so-called footprint of $C$.

Let us recall some notation and results about Groebner bases. Essentially, we shall follow \cite[Chapter 2]{Cox} and \cite{blahut}. So, we denote
\[\Z^s_{\geq 0}=\{(m_1,\dots,m_s)\in \Z^s\tq m_i\geq 0,\;\forall\,i\in \{1,\dots,s\}\}.\]

\begin{definition}
Consider a monomial ordering $\leq_T$ for $\Z^s_{\geq 0}$ and an element $f\in \F[\X]$, that we write $f=\sum_{m\in \supp(f)}f_m\X^m$.
 \begin{enumerate}
 \item The leading exponent, or multidegree of $f$ is  $\LP(f)=\max_{\leq_T}\{m\in \Zs\tq m\in\supp(f)\}$.
  \item The leading term is $\LT(f)=f_{\LP(f)}\X^{\LP(f)}$.
  \item For any subset  $J\subset \F[\X]$ we define the set of leading terms as $\LT(J)=\{\LT(f)\tq f\in J\}$.
 \end{enumerate}
\end{definition}
 
From Dickson's lemma \cite[Theorem 2.4.5]{Cox}, the Hilbert's basis theorem \cite[Theorem 2.5.4]{Cox} and other results one may deduce the existence and properties of Groebner basis.
 
 \begin{definition}
   Let $\leq_T$ be a monomial ordering for $\Z^s_{\geq 0}$ and let $\{0\}\neq J\leq  \F[\X]$ an ideal. We say that the set $\{g_1,\dots,g_t\}$ is a Groebner basis for $J$ if
   \[\langle \LT(g_1),\dots, \LT(g_t)\rangle=\langle \LT(J)\rangle.\]
   
   By convention, $\langle\emptyset\rangle=\{0\}$ and so $\emptyset $ is a Groebner basis for $\{0\}$
 \end{definition}

\begin{remark}
It is immediate to prove that, if $\G=\{g_1,\dots,g_t\}$ is a Groebner basis for $J$ then for all $f\in J$ there exists $g_i\in\G$ such that, its leading term divides to that of $f$; that is, $\LT(g_i)\mid \LT(f)$.
\end{remark}

From here we consider the following partial order in $\Z_{\leq 0}$. 
 
 \begin{definition}
For any pair $m,m'\in\Zs$ with $m=(m_1,\dots,m_s)$ and $m'=(m'_1,\dots,m'_s)$ we say that $m\preceq m'$ if $m_i\leq m_i'$ for each $i=1,\dots,s$. 
 \end{definition}
 
 It is known that (see \cite[Paragraph 2.6]{Cox}), the division algorithm by elements of a Groebner basis has a good behavior. We summarize this in the next result.
 
 \begin{proposition}
  Let $\G=\{g_1,\dots,g_t\}$ be a Groebner basis for $J\leq \F[\X]$ and $f\in\F[\X]$, arbitrary. Then, there exists a unique division remainder, $r\in \F[\X]$ satisfying the following properties:
  \begin{enumerate}
   \item No term of $r$ is greater than or equal to any of $\LT(g_1),\dots,\LT(g_t)$.
   \item $f=g+r$ with $g\in J$. In particular, $f\in J$ if and only if $r=0$.
  \end{enumerate}
 \end{proposition}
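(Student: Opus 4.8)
The plan is to prove the existence and uniqueness of the division remainder by invoking the multivariate division algorithm of \cite[Section 2.3]{Cox} together with the defining property of a Groebner basis.

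First I would establish existence. Running the division algorithm of $f$ by the ordered list $g_1,\dots,g_t$ produces an expression $f = \sum_{i=1}^t a_i g_i + r$ with each $a_i\in\F[\X]$ and with the terminating property that no term of $r$ is divisible by any of $\LT(g_1),\dots,\LT(g_t)$; equivalently, no term of $r$ is greater than or equal to any $\LT(g_i)$ in the sense that $\LP(g_i)\preceq m$ for a term $\X^m$ of $r$. Setting $g=\sum_{i=1}^t a_i g_i\in J$ gives $f=g+r$ as required, which is statement (2), and statement (1) is exactly the termination condition of the algorithm.

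Next I would prove uniqueness, which is where the Groebner hypothesis is essential (the remainder from the plain division algorithm depends on the ordering of the $g_i$ in general). Suppose $f=g+r=g'+r'$ with $g,g'\in J$ and both $r,r'$ satisfying property (1). Then $r-r'=g'-g\in J$. If $r-r'\neq 0$, then $\LT(r-r')\in\LT(J)$, so by the remark following the definition of Groebner basis there is some $g_i\in\G$ with $\LT(g_i)\mid\LT(r-r')$; that is, $\LP(g_i)\preceq\LP(r-r')$. But $\LP(r-r')$ is the exponent of a term appearing in $r$ or in $r'$, contradicting property (1) for that remainder. Hence $r-r'=0$, proving $r=r'$ (and $g=g'$). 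Finally, $f\in J$ iff its remainder is $0$: if $r=0$ then $f=g\in J$; conversely if $f\in J$ then $f=f+0$ is a valid decomposition, so by uniqueness $r=0$.

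The main obstacle is the bookkeeping in the existence step: making the termination condition of the multivariate division algorithm precise and reconciling the two phrasings of ``no term of $r$ is divisible by a leading term'' versus ``no term of $r$ is $\geq_T$ some $\LT(g_i)$,'' which coincide because divisibility of monomials, $\X^{\LP(g_i)}\mid \X^m$, is the relation $\LP(g_i)\preceq m$, and any monomial ordering refines $\preceq$. Since the full details are standard, I would keep this brief and cite \cite[Theorem 2.3.3 and Proposition 2.6.1]{Cox} for the algorithm and its consequences, spelling out only the uniqueness argument that uses the Groebner property.
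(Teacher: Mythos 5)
Your proof is correct and is exactly the standard argument from \cite[Paragraph 2.6]{Cox} (division algorithm for existence, the Groebner property applied to $r-r'\in J$ for uniqueness), which is all the paper itself does: it states this proposition as a summary of that reference and gives no proof of its own. Your explicit reading of ``greater than or equal'' as the divisibility order $\preceq$ rather than the monomial order $\leq_T$ is the intended one, as the paper's follow-up sentence (``$\LT(g_i)\preceq \LT(f)$'') confirms.
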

 
 \begin{definition}
  The remainder $r$ is called the normal form of $f$.
  
  We say then that $f$ is in normal form, with respect to $\G$, in case $f=r$.
 \end{definition}

So, the result above says that if $\G=\{g_1,\dots,g_t\}$ is a Groebner basis for $J$ then, for all $f\in J$ there exists $g_i\in\G$ such that $\LT(g_i)\preceq \LT(f)$.

\begin{definition}
Let $J\leq \F[\X]$ an ideal with Groebner basis $\G$, such that all its elements are monic.
 \begin{enumerate}
  \item We say that $\G$ is minimal if, for all $g\in \G$ we have that $\LT(g)\not\in \langle \LT(\G)\rangle$.
  \item We say that $\G$ is reduced if for all $g\in \G$ no monomial of $g$ belongs to $\langle \LT(\G)\setminus \{g\}\rangle$ (that is, the elements of $\G$ are in normal form).  
 \end{enumerate}
 \end{definition}

We know by \cite[Theorem 2.7.5]{Cox} that every nonzero ideal of $\F[\X]$ has a unique reduced Groebner basis.
 
 \subsection{Delta sets and the footprint of a code. Dimensions}
 
Now we shall apply the results above to the computation of the dimension of Abelian Codes. Set $q,s$ and $r_1,\dots,r_s$, with $n=\prod_{i=1}^s r_i$, as above. We start with the following result from elementary algebra.

Consider the ideal $K=\langle X^{r_1}-1,\dots,X^{r_s}-1\rangle$ and the quotient ring $\F[X_1,\dots,X_s]/K=\F[\X]/K=\F(r_1,\dots,r_s)$. Let $C\leq \F(r_1,\dots,r_s)$ be a code. By the Correspodence Theorem, we know that there is a unique ideal $J\leq \F[\X]$ such that $K\subseteq J$ and that $J/K=C$. Now, by the Second Isomorphism Theorem,
\[\F[\X]/J\cong \left(\F[\X]/K\right)/\left(J/K\right)\cong \F(r_1,\dots,r_s)/C\]
is an isomorphism of algebras; and so as $\F$-vector spaces we have that
\begin{equation}\label{isomorfismo de algebras}
 \dim_\F(\F[\X]/J)=\dim_\F(\F(r_1,\dots,r_s)/C)=n-\dim_\F(C).
\end{equation}

Now, let $C\leq \F(r_1,\dots,r_s)$ be a code and let  $J\leq \F[\X]$ be the unique ideal such that $K\subseteq J$ and $J/K=C$. Fix a monomial ordering $\leq_T$. Let $\G_{\leq_T}=\G=\{g_1,\dots,g_t\}$ a minimal Groebner basis for $J$. It is clear that, as  $K\subset J$ there exists a subset $\{f_1,\dots,f_s\}\subset\G$, such that $\LP(f_i)$ is such that, $\LP(f_i)\preceq (0,\dots,0,r_i,0\dots,0)$.
  
Following the notation, we define
\[A(\G)= \sum_{i=1}^t \LP(g_i)+\Zs.\]
By the properties of minimal Groebner basis described above, we know that if $\G'$ is another minimal basis of $J$ \textbf{under the same ordering} ``$\leq_T$'', then $A(\G)=A(\G')$. This uniqueness allows us to give the notion of the footprint.

\begin{definition}
Let $\leq_T$ be a fixed monomial ordering and $C\leq \F(r_1,\dots,r_s)$ be a code and let  $J\leq \F[\X]$ be the unique ideal such that $K\subseteq J$ and $J/K=C$. We call the footprint of $C$ the set 
\[\Delta(C)=\Zs \setminus A(\G),\]
where $\G$ is any minimal Groebner basis for $J$.
\end{definition}

Let us recall two elementary properties related to the results above. First, by definition of Groebner basis, we have that, for $f\in \F[\X]$, with normal form $r_f$ it happens that if $f\not\in J$ then $0\neq \LP(r_f)\in \Delta(J)$. Second, by the uniqueness of the remainder we have that, for $f\in \F[\X]$, the correspondence $f\mapsto r_f$ is a $\F$-linear tranformation.
 
On the other hand, note that the footprint of any Abelian Code, $C$, is a finite set; in fact, $\Delta(C)\subseteq \I$.

\begin{theorem}\label{dimension bases de groebner}
 Let $\leq_T$ be a fixed monomial ordering and $C\leq \F(r_1,\dots,r_s)$ be an Abelian Code with footprint $\Delta(C)$. Then by taking the canonical representatives, the set  $\{\X^m\tq m\in \Delta(C)\}$ is a basis for $\F(r_1,\dots,r_s)/C$.
 
 Hence, $\dim_\F(C)=n-|\Delta(C)|$.
\end{theorem}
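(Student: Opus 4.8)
The plan is to show that the normal-form map gives an explicit linear isomorphism between $\F(r_1,\dots,r_s)/C$ and the $\F$-span of the monomials indexed by $\Delta(C)$. First I would recall the setup: $J\leq \F[\X]$ is the unique ideal with $K\subseteq J$ and $J/K=C$, and $\G$ is a (minimal) Groebner basis for $J$ under the fixed ordering $\leq_T$. Every class in $\F(r_1,\dots,r_s)/C$ can be represented by a polynomial $f\in\F[\X]$, and composing with the normal form gives a well-defined map $\pi\colon \F(r_1,\dots,r_s)/C\to \F[\X]$, $f+C\mapsto r_f$: it is well-defined because two representatives differing by an element of $J$ have the same remainder (uniqueness of the remainder), and it is $\F$-linear because $f\mapsto r_f$ is, as noted just before the theorem.

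Next I would identify the image of $\pi$. By property (1) of the division remainder, no term of $r_f$ is $\succeq \LP(g_i)$ for any $g_i\in\G$; equivalently, every monomial $\X^m$ appearing in $r_f$ satisfies $m\notin A(\G)$, i.e. $m\in\Delta(C)$. Hence $\Img(\pi)\subseteq \mathrm{span}_\F\{\X^m\tq m\in\Delta(C)\}$. Conversely, any polynomial all of whose monomials lie in $\Delta(C)$ is already in normal form (no term is divisible by a leading term of $\G$), so it equals its own remainder and is therefore in the image. Thus $\Img(\pi)=\mathrm{span}_\F\{\X^m\tq m\in\Delta(C)\}$ exactly. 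Since $\Delta(C)\subseteq\I$, these are genuine canonical representatives in $\F(r_1,\dots,r_s)$, and the monomials $\{\X^m\}_{m\in\Delta(C)}$ are $\F$-linearly independent (distinct monomials), so they form a basis of $\Img(\pi)$.

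Then I would check that $\pi$ is injective. If $r_f=0$ then $f\in J$ by property (2) of the remainder, so $f+C=0$ in $\F(r_1,\dots,r_s)/C=(\F[\X]/K)/(J/K)$; more carefully, one uses that $f$ reduces to $0$ means $f\in J$, hence the image of $f$ in $\F(r_1,\dots,r_s)$ lies in $J/K=C$. Surjectivity onto the claimed span was established in the previous step, so $\pi$ is an isomorphism of $\F$-vector spaces. To finish, pass through $\pi$ to transport the basis: the preimages $\{\X^m+C\}_{m\in\Delta(C)}$ form a basis of $\F(r_1,\dots,r_s)/C$, which is the first assertion. Counting dimensions gives $\dim_\F\big(\F(r_1,\dots,r_s)/C\big)=|\Delta(C)|$, and combining with the identity $\dim_\F(\F[\X]/J)=n-\dim_\F(C)$ from~(\ref{isomorfismo de algebras}) yields $\dim_\F(C)=n-|\Delta(C)|$.

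The main obstacle I anticipate is not conceptual but bookkeeping: making sure that the passage between $\F[\X]$, $\F[\X]/K$, and $(\F[\X]/K)/(J/K)\cong\F(r_1,\dots,r_s)/C$ is handled cleanly, in particular that ``$r_f$ lies in normal form'' really does translate to ``$\LP(r_f)\in\Delta(C)$'' and that the monomials indexed by $\Delta(C)$ descend to linearly independent vectors modulo $C$ (which they do, precisely because their exponents are not in $A(\G)$, so no nontrivial combination can lie in $J$). One should also remark why $\Delta(C)$ is finite and contained in $\I$: the elements $f_i\in\G$ with $\LP(f_i)\preceq(0,\dots,r_i,\dots,0)$ guarantee that $A(\G)$ contains a translate of $\Zs$ along each axis beyond coordinate $r_i$, so its complement $\Delta(C)$ fits inside $\{0,\dots,r_1-1\}\times\cdots\times\{0,\dots,r_s-1\}=\I$; this is what lets us speak of canonical representatives at all.
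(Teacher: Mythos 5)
Your proposal is correct and takes essentially the same route as the paper: both arguments hinge on the normal-form (remainder) map with respect to a minimal Groebner basis of $J$, the observation that the monomials whose exponents lie in $\Delta(C)$ are precisely those already in normal form, and the isomorphism $\F[\X]/J\cong \F(r_1,\dots,r_s)/C$ of Equation~\eqref{isomorfismo de algebras}. Your write-up is in fact more detailed than the paper's, which leaves the spanning, linear-independence and injectivity checks implicit.
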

\begin{proof}
First, note that, from the definition of minimal Groebner  basis together with the fact that, for any $m\in\Delta(C)$, the normal form of the polynomial $\X^m$ is $ r_{\X^m}=\X^m$. This means that the own $\X^m$ is a canonical representative. So, the set $\{\X^m\tq m\in \Delta(C)\}$ is a $\F$-basis for $\F[\X]/J$. Now, from the isomorphism in Equation~\eqref{isomorfismo de algebras} $\F[\X]/J\cong F(r_1,\dots,r_s)$ the result follows immediately. The last part follws directly from the fact that the correspondence $f\mapsto r_f$ is $\F$-linear.
\end{proof}

Combining Theorem~\ref{dimension codigos conj def} and Theorem~\ref{dimension bases de groebner}, above, we get the following corollary.

\begin{corollary}\label{huella y conj def}
 In the setting of Theorem~\ref{dimension codigos conj def} and Theorem~\ref{dimension bases de groebner} we have that $|\Delta(C)|=|\D_\alpha(C)|$.
\end{corollary}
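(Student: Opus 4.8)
The plan is to derive Corollary~\ref{huella y conj def} purely as a counting consequence of the two dimension formulas already established. Both Theorem~\ref{dimension codigos conj def} and Theorem~\ref{dimension bases de groebner} express $\dim_\F(C)$ for the \emph{same} abelian code $C$, the first as $n-|\D_{\bs\alpha}(C)|$ and the second as $n-|\Delta(C)|$. Since the left-hand sides coincide, so must the right-hand sides, and cancelling $n$ gives $|\D_{\bs\alpha}(C)|=|\Delta(C)|$ at once.

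Concretely, I would write: fix $\bs\alpha\in U$ and a monomial ordering $\leq_T$, and let $C\leq\F(r_1,\dots,r_s)$ be an abelian code. By Theorem~\ref{dimension codigos conj def} we have $\dim_\F(C)=n-|\D_{\bs\alpha}(C)|$, while by Theorem~\ref{dimension bases de groebner} we have $\dim_\F(C)=n-|\Delta(C)|$. Equating the two expressions yields $n-|\D_{\bs\alpha}(C)|=n-|\Delta(C)|$, hence $|\D_{\bs\alpha}(C)|=|\Delta(C)|$, as claimed.

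There is essentially no obstacle here: the only things to be careful about are that both theorems apply to an arbitrary abelian code (they do, with no extra hypotheses beyond the standing semisimplicity assumption $\gcd(n,q)=1$), and that the quantities $\D_{\bs\alpha}(C)$ and $\Delta(C)$ are both finite subsets of $\I$ — which the excerpt notes explicitly for $\Delta(C)$, and which holds for $\D_{\bs\alpha}(C)$ since $\D_{\bs\alpha}(C)\subseteq\I$ by definition. So the cardinalities are genuine non-negative integers and the cancellation is legitimate. One might add a remark that this equality is purely numerical: it does \emph{not} assert $\D_{\bs\alpha}(C)=\Delta(C)$ as sets (indeed they live in $\I$ described via different coordinates and depend on different auxiliary data, namely $\bs\alpha$ versus $\leq_T$), only that they always have the same size.

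If the authors want a slightly more self-contained rendering, one can instead route through $\dim_\F(\F(r_1,\dots,r_s)/C)$: Theorem~\ref{dimension bases de groebner} identifies this with $|\Delta(C)|$ via the basis $\{\X^m : m\in\Delta(C)\}$, while Theorem~\ref{dimension codigos conj def} gives $\dim_\F(\F(r_1,\dots,r_s)/C)=n-\dim_\F(C)=|\D_{\bs\alpha}(C)|$; comparing again gives the result. Either way the proof is a one-line consequence, and I would keep it short.
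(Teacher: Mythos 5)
Your proof is correct and matches the paper's intent exactly: the corollary is stated as an immediate consequence of combining the two dimension formulas, and your cancellation argument is precisely that combination.
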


\section{Interactions between defining sets and footprints.}

As we have commented above, some of the most important parameters, structure and coding-decoding techniques can be done by means of both defining sets through the discrete Fourier transform and the footprint through minimal Groebner basis.

In practice, the decision to use one tool or another, depends on the code construction; for instance, algebraic geometric codes make more use of footprints than general abelian codes.

We remark that in any case (if necessary) it is always possible to move from defining sets to footprints and vice versa. This transit can be done even avoiding the previous calculation of a minimal Groebner basis, as it is shown in \cite{BS2}.

We conclude by pointing out that there are frameworks in which all the tools that we have seen in this note must be combined to achieve a goal. Perhaps the most beautiful example of this is locator decoding (see \cite{BS3,sakata2}). We need the defining set of the code to compute the syndrome values; then we implement the Berlekamp-Massey-Sakata algorithm to get a minimal Groebner basis of the locator ideal and finally we get the defining set of the locator ideal to get the error locations. By Corollary~\ref{huella y conj def} we have that the number of errors occurred during transmission is exactly the cardinality of the footprint of the locator ideal.


\begin{thebibliography}{99}

\bibitem{anderson-fuller} F. W. Anderson and K. R. Fuller, \textit{Rings and Categories of Modules}, Springer-Verlag, Nueva York, 1974.

\bibitem{BS2} J.J. Bernal and J.J. Sim\'{o}n, ``Information sets in abelian codes: defining sets and Groebner basis'',\textit{ Des. Codes Crypto.}, \textbf{70} (2014), 175--188

\bibitem{BS3} J.J. Bernal and J.J. Sim\'on, ``A new approach to the Berlekamp-Massey-Sakata Algorithm. Improving Locator Decoding''.  \textit{IEEE Transactions on Information Theory}, \textbf{67}(1) (2021), 268-281.

\bibitem{blahut} R.E. Blahut, ``Decoding of cyclic codes and codes on curves''. In W.C. Huffman and V. Pless (Eds.), \emph{Handbook of Coding Theory}.  Vol. II, 1569-1633, 1998. 
	
\bibitem{camion} P. Camion, \textit{Abelian codes}. MRC Tech. Sum. Rep. no. 1059, Univ. of Wisconsin, Madison (1970).

\bibitem{chabanne} H.  Chabanne, ``Permutation decoding of abelian codes'', \textit{IEEE Trans. on Inform. Theory}, vol. 38, no. 6,  pp. 1826-1829 (1992).

\bibitem{Cox}  D. Cox, J. Little, and D. O'Shea, \textsl{Ideals, varieties, and algorithms}. Springer: New York (1992).
	
\bibitem{elias} E. J. Garc\'{i}a Claro, \textit{Dimensi\'on de ideales an \'algebras de grupo y c\'odigos de grupo}, Tesis Doctoral, UAM-I, M\'exico, 2020.

\bibitem{ghorpade} Sudhir R. Ghorpade, ``A Note on Nullstellensatz over Finite Fields''. \textit{Contemporary Mathematics}, vol. 738, 2019. 

\bibitem{Poli} A. Poli and Ll. Huguet, \textit{Error correcting codes}, Prentice Hall and Masson, 1992.
%
\bibitem{Rowen} L. H. Rowen, \textit{Ring Theory}, Vol. 1, Academic Press, Boston, 1988.

\bibitem{Sala y Mora} M. Sala, T. Mora, L. Perret et al. \textit{Gr\"obner basis, Coding, and Cryptography}. Springer-Verlag, 2010.

\bibitem{sakata2} S. Sakata, The BMS Algorithm, \textit{Gr\"obner basis, Coding, and Cryptography}. Springer-Verlag, 2010.
\end{thebibliography}
\end{document}